\def\@tocline#1#2#3#4#5#6#7{\relax
  \ifnum #1>\c@tocdepth % then omit
  \else
    \par \addpenalty\@secpenalty\addvspace{#2}%
    \begingroup \hyphenpenalty\@M
    \@ifempty{#4}{%
      \@tempdima\csname r@tocindent\number#1\endcsname\relax
    }{%
      \@tempdima#4\relax
    }%
    \parindent\z@ \leftskip#3\relax \advance\leftskip\@tempdima\relax
    \rightskip\@pnumwidth plus4em \parfillskip-\@pnumwidth
    #5\leavevmode\hskip-\@tempdima
      \ifcase #1
       \or\or \hskip 1em \or \hskip 2em \else \hskip 3em \fi%
      #6\nobreak\relax
      \dotfill
      \hbox to\@pnumwidth{\@tocpagenum{#7}}
    \par
    \nobreak
    \endgroup
  \fi}
\newtheorem{theorem}{Theorem}
\numberwithin{theorem}{section}
\newtheorem{lemma}[theorem]{Lemma}
\newtheorem{proposition}[theorem]{Proposition}
\theoremstyle{definition}
\theoremstyle{remark}
\newtheorem{remark}[theorem]{Remark}
\newcommand{\R}{\mathbb{R}}
\newcommand{\Dtwo}{\mathcal{D}_k}
\newcommand{\bx}{\boldsymbol{x}}
\newcommand{\bw}{\bx'}
\newcommand{\vw}{\bx}
\newcommand{\bu}{\boldsymbol{u}}
\newcommand{\bv}{\boldsymbol{v}}
\newcommand{\bt}{\boldsymbol{t}'}
\newcommand{\be}{\begin{eqnarray*}}
\newcommand{\ee}{\end{eqnarray*}}
\newcommand{\bs}{\boldsymbol}
\newcommand{\Smone}{\mathbb{S}^{m-1}}
\newcommand{\Bm}{\mathbb{B}^{m}}
\newcommand{\BUH}{\mathbb{R}^{m-1}}
\newcommand{\vx}{\vw}
\newcommand{\vt}{\bs{t}}
\newcommand{\HK}{\mathcal{H}_k}
\newcommand{\z}{\bx}
\newcommand{\bz}{\bs{z}}
\newcommand{\ze}{\bs{\zeta}}
\newcommand{\et}{\bs{\eta}}
\newcommand{\Clm}{\mathcal{C}l_m}
\newcommand{\bnu}{\bs{\nu}}
\newcommand{\bom}{\bs{\omega}}
\numberwithin{equation}{section}
\begin{document}

\title[Boundary Value Problems for Bosonic Laplacians]{Boundary value problems in Euclidean space for Bosonic Laplacians}
\author[C. Ding]{Chao Ding}
\address{Department of Mathematics and Statistics, Masaryk University, Brno, Czech Republic}
\email{chaoding@math.muni.cz}
\author[P.T. Nguyen]{Phuoc-Tai Nguyen}
\address{Department of Mathematics and Statistics, Masaryk University, Brno, Czech Republic}
\email{ptnguyen@math.muni.cz}
\author[J. Ryan]{John Ryan}
\address{Department of Mathematical Science, University of Arkansas, Fayetteville, AR. U.S.A.}
\email{jryan@uark.edu}
%\date{January, 2019}
\date{\today}
\begin{abstract} 
A bosonic Laplacian is a conformally invariant second order differential operator acting on smooth functions defined on domains in Euclidean space and taking values in higher order irreducible representations of the special orthogonal group. In this paper, we study boundary value problems involving bosonic Laplacians in the upper-half space and the unit ball. Poisson kernels in the upper-half space and the unit ball are constructed, which give us solutions to the Dirichlet problems with $L^p$ boundary data, $1 \leq p \leq \infty$. We also prove the uniqueness for solutions to the Dirichlet problems with continuous data for bosonic Laplacians and provide analogs of some properties of harmonic functions for null solutions of bosonic Laplacians, for instance, Cauchy's estimates, the mean-value property, Liouville's Theorem, etc.
\medskip
	
\noindent\textit{Keywords: Bosonic Laplacians, Dirichlet problem, Mean-value property, Liouville's Theorem, Cauchy's estimates.}
	
\medskip
	
\noindent\textit{2000 Mathematics Subject Classification: 42Bxx, 42B37, 35J25, 35B53.}

\end{abstract}

\maketitle

%\tableofcontents

%%%%%%%%%%%%    Introduction %%%%%%%%%%%%%%
\section{Introduction}
Boundary value problems are extremely important as they model a large amount of phenomena and applications, such as solid mechanics, heat transfer, fluid mechanics, acoustic diffusion, etc. Among the earliest boundary value problems to be investigated are the Dirichlet problems of Laplace's equation and Poisson's equation. These problems are frequently studied in many branches of physics, for instance, electrostatics, gravitation and fluid dynamics. Thus, the study of boundary value problems of Laplace's equation is considered as one of the most important topics in classical harmonic analysis, more details can be found in \cite{Axler,Stein}. Here we investigate a type of second order conformally invariant differential operators, named as bosonic Laplacians. These differential operators act on functions taking values in irreducible representations of $SO(m)$, hence of the spin group $Spin(m)$. In this case, these representation spaces are realized as the spaces of scalar-valued homogeneous harmonic polynomials. In particular, when the degree of the space of homogeneous harmonic polynomials is zero, the bosonic Laplacian reduces to the classical Laplacian. Bargmann and Wigner \cite{Bargmann} showed that particles should correspond to irreducible representations of the Lorentz group, labelled by a quantum number called spin. Thus, we also named the representation spaces  mentioned above as \emph{higher spin spaces}, see \cite{DeBie,DWR,Eelbode}. We discover that  bosonic Laplacians also have many properties similar to those of the classical Laplacian, for instance, Cauchy's estimates, the mean-value property, Liouville's Theorem, the Poisson kernel, etc. In order to facilitate calculations, Clifford analysis and algebras are needed.
\par
The first investigation on this type of differential operators can be traced back to the Stein-Weiss gradients introduced in \cite{Stein} in $1968$. The authors provided a technique to construct first order conformally invariant differential operators through a certain projection. Explicit expressions of conformally invariant differential operators in the higher spin spaces have been provided in \cite{Bures,Dunkl,DeBie,Eelbode,DWR} with different approaches via Clifford analysis. In particular, the second order conformally invariant differential operators, which have integer spin, are named as \emph{bosonic Laplacians} (also called the higher spin Laplace operators in \cite{DeBie}) in analogy with bosons in physics, which are particles of integer spin. %on In \cite{Bures} and \cite{Dunkl}, the authors studied the first order conformally invariant differential operator, named as Rarita-Schwinger operator, for higher spin fields as well via Clifford analysis. Eelbode \cite{Eelbode} and De Bie et. al. \cite{DeBie} recently discovered second-order conformally invariant differential operators, named as the generalized Maxwell operators or bosonic Laplacians (also called the higher spin Laplace operators). Higher-order conformally invariant differential operators in the higher spin setting was constructed in \cite{DWR} with a fundamental solution approach.
Further, Clerc and \O rsted \cite{CO} used a representation-theoretic framework to show the relations between these conformally invariant differential operators and Knapp-Stein intertwining operators. In \cite{Ding,DR}, a Borel-Pompeiu formula and a Cauchy's (or Green's) integral formula for these conformally invariant differential operators  were provided for the first time. These results are of interest and motivate a study of boundary value problems involving the aforementioned operators which have not been well understood yet.
\par
In this paper, we carry out an investigation of boundary value problems involving bosonic Laplacians. The intricate form of these operators, together with the interaction of the two variables and the rotation action on the second variable, leads to the invalidity of some classical techniques and highly complicates the analysis. 
\par
\textbf{Main results:} The contribution of the paper is the construction of the Poisson kernels in the higher spin spaces in the upper-half space and the unit ball. This leads to the existence of solutions to the Dirichlet problems for bosonic Laplacians with $L^p$ boundary data, $1 \leq p \leq \infty$. This paper also shows a counterpart of important results regarding the Laplace operator such as the mean-value property, Cauchy's estimates and Liouville's Theorem, etc. Our results extend the results regarding the first order higher spin differential operators described by Stein and Weiss to the second order ones, and form the basis for further study of PDEs involving bosonic Laplacians, such as bosonic Hardy spaces and bosonic Bergman spaces.

\subsection*{Acknowledgements}
Chao Ding and Phuoc-Tai Nguyen are supported by Czech Science Foundation, project GJ19-14413Y.
%%%%%%%%%%%%% Preliminaries %%%%%%%%%%%%%%%%%%%
\section{Preliminaries}
We begin with some basics of Clifford algebras. Let $\{\bs{e}_1,\cdots,\bs{e}_m\}$ be a standard orthonormal basis for the $m$-dimensional Euclidean space $\R^m$. The (real) Clifford algebra $\mathcal{C}l_m$ is generated by $\R^m$ with the relationship $\bs{e}_i\bs{e}_j+\bs{e}_j\bs{e}_i=-2\delta_{ij},\ 1\leq i,j\leq m.$
An arbitrary element of the basis of the Clifford algebra can be written as $\bs{e}_A=\bs{e}_{j_1}\cdots \bs{e}_{j_r},$ where $A=\{j_1, \cdots, j_r\}\subset \{1, 2, \cdots, m\}$ and $1\leq j_1< j_2 < \cdots < j_r \leq m$. Hence for any $a\in \mathcal{C}l_m$, we have $a=\sum_Aa_A\bs{e}_A,$ where $a_A\in \mathbb{R}$, in particular, $\bs{e}_{\emptyset}=1$. The $m$-dimensional Euclidean space $\R^m$ is embedded into $\mathcal{C}l_m$ with the mapping
$\bx=(x_1,\cdots,x_m)\ \mapsto\quad \sum_{j=1}^mx_j\bs{e}_j.$
For $\bx\in\R^m$, one can easily obtain that $|\bx|^2=\sum_{j=1}^mx_j^2=-\bx^2$. 
%The complex Clifford algebra $\mathcal{C}l_m(\mathbb{C})$ is defined by $\mathcal{C}l_m(\mathbb{C})=\mathcal{C}l_m\otimes\mathbb{C}$. 
For $a=\sum_Aa_A\bs{e}_A\in\Clm$, we define the reversion of $a$ as
$
\widetilde{a}=\sum_{A}(-1)^{|A|(|A|-1)/2}a_A\bs{e}_A,
$
where $|A|$ is the cardinality of $A$. In particular, $\widetilde{\bs{e}_{j_1}\cdots \bs{e}_{j_r}}=\bs{e}_{j_r}\cdots \bs{e}_{j_1}$. Also $\widetilde{ab}=\widetilde{b}\widetilde{a}$ for $a, b\in\Clm$. %In this paper, we consider the complex Clifford algebra unless otherwise specified.\\
\par
%Suppose $U\subset\R^m$ is a domain, a diffeomorphism $\phi: U\longrightarrow \mathbb{R}^m$ is said to be conformal if, for each $\bx\in U$ and each $\mathbf{u,v}\in TU_{\bx}$, the angle between $\mathbf{u}$ and $\mathbf{v}$ is preserved under the corresponding differential at $\bx$, $d\phi_{\bx}$. 
For $m\geq 3$, a theorem of Liouville tells us that the only conformal transformations are M\"obius transformations. Ahlfors and Vahlen showed that any M\"{o}bius transformation in $\mathbb{R}^m \cup \{\infty\}$ can be expressed as $y=(a\bx+b)(c\bx+d)^{-1}$ with $a,\ b,\ c,\ d\in \mathcal{C}l_m$ satisfying certain conditions. Since $\bs{y}=(a\bx+b)(c\bx+d)^{-1}=ac^{-1}+(b-ac^{-1}d)(c\bx+d)^{-1}$, a conformal transformation can be decomposed as compositions of translation, dilation, reflection and inversion. This gives an \emph{Iwasawa decomposition} for M\"obius transformations. See \cite{Porteous} for more details.
%\par Now suppose $\bs{a}\in \mathbb{S}^{m-1}\subseteq \mathbb{R}^m$ and $\bx\in\R^m$. If we consider $\bs{a}\bx\bs{a}$, we may decompose
%$\bx=\bx_{\bs{a}\parallel}+\bx_{\bs{a}\perp},$
%where $\bx_{\bs{a}\parallel}$ is the projection of $\bx$ onto $\bs{a}$ and $\bx_{\bs{a}\perp}$ is the remainder part of $\bx$ perpendicular to $\bs{a}$. Hence $\bx_{\bs{a}\parallel}$ is a scalar multiple of $\bs{a}$ and we have
%$\bs{a}\bx\bs{a}=\bs{a}\bx_{\bs{a}\parallel}\bs{a}+\bs{a}\bx_{\bs{a}\perp}\bs{a}=-\bx_{\bs{a}\parallel}+\bx_{\bs{a}\perp}.$
%So the action $\bs{a}\bx\bs{a}$ describes a reflection of $\bs{x}$ in the direction of $\bs{a}$, i.e., $\bs{a}\bx\bs{a}=O_{\bs{a}}\bx$, where $O_{\bs{a}}\in O(m)$. More details can be found in, for instance, \cite{Del}.
\par
The classical Dirac operator is defined as $D_{\bx}=\sum_{j=1}^m\partial_{x_j}\bs{e}_j$, which factorizes the Laplace operator $\Delta_{\bx}=-D_{\bx}^2$. Let $\HK$ ($1 \leq k \in {\mathbb N}$) be the space of real-valued homogeneous harmonic polynomials of degree $k$ in $m$-dimensional Euclidean space. If we consider a function $f(\bx,\bu)\in C^{\infty}(\R^m\times\R^m,\HK)$, i.e., for a fixed $\bx\in\R^m$, $f(\bx,\bu)\in\HK$ with respect to $\bu \in \R^m$. Recall that bosonic Laplacians \cite{Eelbode} are defined as 
\begin{eqnarray}\label{Dtwo}
&&\Dtwo:\ C^{\infty}(\R^m\times\R^m,\HK)\longrightarrow C^{\infty}(\R^m\times\R^m,\HK),\nonumber\\
&&\Dtwo=\Delta_{\bx}-\frac{4\langle \bu,D_{\bx}\rangle\langle D_{\bu},D_{\bx}\rangle}{m+2k-2}+\frac{4|\bu|^2\langle D_{\bu},D_{\bx}\rangle^2}{(m+2k-2)(m+2k-4)},
\end{eqnarray}
where $\langle\ ,\ \rangle$ is the standard inner product in $\R^m$. In particular, $
\mathcal{D}_1=\Delta_{\bx}-\frac{4}{m}\langle \bu,D_{\bx}\rangle\langle D_{\bu},D_{\bx}\rangle
$
 is the generalized Maxwell operator. Further, it reduces to the source-free classical Maxwell equations given in terms of the Faraday-tensor when $m=4,\ k=1$ with signature $(-,+,+,+)$. More details can be found in \cite{Eelbode}.
%%%%%%%%%%%%%%  Poisson kernel in the upper half space %%%%%
\section{Dirichlet problems of Bosonic Laplacians}
In this section, we investigate Dirichlet problems involving bosonic Laplacians in the upper-half space and the unit ball with different boundary data.
%%%%%%%%%%%%%%%
\subsection{Poisson kernel in the upper-half space}
We start this subsection by introducing some technical lemmas for the real-valued homogeneous harmonic polynomials. For $\bx \in \R^{m}_+$, we write $\bx=(\bx',y)$ with $\bx'=(x_1,\cdots,x_{m-1}) \in\R^{m-1}$ and $y>0$. We claim that
\begin{lemma}\label{lemma1}
Suppose that $f_k(\bu)\in\HK$ and $\bu\in\R^m$. Then there  holds
\be
c_{m,k}\int_{\R^{m-1}}\displaystyle\frac{y}{|\bx|^m}f_k\bigg(\displaystyle\frac{\bx\bu\bx}{|\bx|^2}\bigg)d\bx'=f_k(\bu),
\ee
where $c_{m,k}=\frac{2(m+2k-2)}{(m-2)\omega_m}$ and $\omega_m$ is the surface area of the unit sphere $\Smone$.
\end{lemma}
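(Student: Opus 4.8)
The plan is to convert the boundary integral into a rotation average over the sphere $\Smone$, recognise that average as a scalar multiple of $f_k$ by Schur's lemma, and then pin down the scalar on a single convenient harmonic. The first step is to simplify the Clifford expression: from $\bx\bu+\bu\bx=-2\lb\bx,\bu\rb$ and $\bx^2=-|\bx|^2$ one gets
\[
\frac{\bx\bu\bx}{|\bx|^2}=\bu-2\,\frac{\lb\bx,\bu\rb}{|\bx|^2}\,\bx=:R_{\bx}\bu,
\]
the reflection of $\bu$ across the hyperplane $\bx^{\perp}$; in particular $R_{\bx}\bu$ is again a vector in $\R^m$ (so $f_k$ may be applied to it), $|R_{\bx}\bu|=|\bu|$, and $R_{\bx}$ depends only on the direction of $\bx$. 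Writing $\bx=(\bx',y)$ and substituting $\bx'=y\vt$ eliminates $y$: one checks $\tfrac{y}{|\bx|^m}\,d\bx'=(1+|\vt|^2)^{-m/2}\,d\vt$ and $R_{\bx}=R_{(\vt,1)}$, so the left-hand side of the lemma equals $c_{m,k}\int_{\R^{m-1}}(1+|\vt|^2)^{-m/2}f_k\!\big(R_{(\vt,1)}\bu\big)\,d\vt$, confirming in passing its independence of $y$.

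Next I would pass to the sphere via the central (gnomonic) projection $\vt\mapsto\xi=(\vt,1)/(1+|\vt|^2)^{1/2}$, a diffeomorphism of $\R^{m-1}$ onto the open upper hemisphere of $\Smone$ whose Jacobian satisfies $(1+|\vt|^2)^{-m/2}\,d\vt=d\sigma(\xi)$; since $(\vt,1)$ and $\xi$ are parallel, $R_{(\vt,1)}=R_{\xi}$, so the expression becomes $c_{m,k}\int_{\Smone_+}f_k(R_{\xi}\bu)\,d\sigma(\xi)$. Because $R_{-\xi}=R_{\xi}$, the integrand is even in $\xi$, hence this equals $\tfrac{c_{m,k}}{2}\int_{\Smone}f_k(R_{\xi}\bu)\,d\sigma(\xi)$. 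Now consider the operator $T\colon f_k\mapsto\big(\bu\mapsto\tfrac1{\omega_m}\int_{\Smone}f_k(R_{\xi}\bu)\,d\sigma(\xi)\big)$. For each fixed $\xi$ the map $\bu\mapsto f_k(R_{\xi}\bu)$ is harmonic and homogeneous of degree $k$ (as $R_{\xi}\in O(m)$), so $T$ maps $\HK$ into $\HK$; moreover $T$ intertwines the $SO(m)$-action, since $AR_{\xi}A^{-1}=R_{A\xi}$ and $d\sigma$ is rotation invariant. As $\HK$ is an irreducible $SO(m)$-module for $m\ge3$, Schur's lemma forces $T=\lambda\,\mathrm{Id}_{\HK}$ for some $\lambda\in\R$; extending $T$ $\mathbb{C}$-linearly, the same identity holds on $\HK\otimes\mathbb{C}$, so it suffices to compute $\lambda$ on one convenient complex harmonic.

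To find $\lambda$ I would test on the harmonic polynomial $f_k(\bw)=(w_1+\mathrm{i}w_2)^k$ at $\bu=\bs{e}_1-\mathrm{i}\bs{e}_2$, where $f_k(\bu)=2^k$ and
\[
f_k(R_{\xi}\bu)=\big(2-2(\xi_1-\mathrm{i}\xi_2)(\xi_1+\mathrm{i}\xi_2)\big)^k=2^k\big(\xi_3^2+\cdots+\xi_m^2\big)^k,
\]
so that $\lambda=\tfrac1{\omega_m}\int_{\Smone}(\xi_3^2+\cdots+\xi_m^2)^k\,d\sigma(\xi)$. A routine spherical integration (splitting $\R^m=\R^2\times\R^{m-2}$ and integrating out the radius of the last $m-2$ coordinates) evaluates this to $\tfrac{m-2}{m+2k-2}$. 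Substituting $c_{m,k}=\tfrac{2(m+2k-2)}{(m-2)\omega_m}$ then gives $\tfrac{c_{m,k}}{2}\,\lambda\,\omega_m\,f_k(\bu)=f_k(\bu)$, which is the assertion.

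I expect the only genuine obstacles to be organisational: verifying the gnomonic Jacobian $(1+|\vt|^2)^{-m/2}\,d\vt=d\sigma(\xi)$, and confirming carefully that $T$ is a well-defined $SO(m)$-equivariant endomorphism of $\HK$ so that Schur's lemma applies (this rests only on the identity $AR_{\xi}A^{-1}=R_{A\xi}$). If one wishes to avoid Schur's lemma, one may instead write the sphere integral in geodesic polar coordinates about $\bu$ and invoke the Funk--Hecke/Gegenbauer projection formula for the latitude averages of $f_k$; this exhibits the integral directly as a scalar multiple of $f_k(\bu)$, and the scalar is still most cheaply evaluated via the isotropic test function above. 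No real analytic difficulty arises, since every integral in sight converges absolutely for $f_k$ a polynomial.
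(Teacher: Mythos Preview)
Your argument is correct and self-contained, but it reaches the key spherical integral by a different path than the paper. The paper first observes that the integral is independent of $y$, then multiplies by the weight $\frac{1}{1+y^2}$ and integrates over $y\in(0,\infty)$; switching to polar coordinates in $\R^m_+$ collapses this double integral to $\frac{\pi}{2}\int_{\Smone_+}f_k(\bs{\zeta}\bu\bs{\zeta})\,dS(\bs{\zeta})$, and the evenness in $\bs{\zeta}$ gives the full sphere. You obtain the same spherical integral more directly via the gnomonic projection $\vt\mapsto(\vt,1)/\sqrt{1+|\vt|^2}$ and its Jacobian $(1+|\vt|^2)^{-m/2}d\vt=d\sigma(\xi)$. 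For the evaluation of $\int_{\Smone}f_k(\bs{\zeta}\bu\bs{\zeta})\,dS(\bs{\zeta})$, the paper simply invokes \cite[Lemma~6]{Dunkl}, whereas you supply an independent proof: Schur's lemma identifies the rotation-average operator $T$ as a scalar on the irreducible $SO(m)$-module $\HK$, and the isotropic test $f_k=(w_1+iw_2)^k$ reduces the scalar to $\frac{1}{\omega_m}\int_{\Smone}(\xi_3^2+\cdots+\xi_m^2)^k\,d\sigma=\frac{m-2}{m+2k-2}$. Your route is longer but avoids the external citation and makes the representation-theoretic mechanism explicit; the paper's route is quicker once \cite{Dunkl} is granted. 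Either way the constant $c_{m,k}$ comes out the same.
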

\begin{proof}
Since $f_k(\bu)\in\HK$, we see that $f_k\big(\frac{\bx\bu\bx}{|\bx|^2}\big)$ is bounded for each fixed $\bu\in\R^m$, $\bx'\in\R^{m-1}$, and
$\frac{y}{|\bx|^m} \in L^1(\R^{m-1},d\bx')$ (see  \cite[Chapter 7]{Axler}). Further, if we let $\bx=y\bs{z}$, we derive that the integral $\int_{\R^{m-1}}\frac{y}{|\bx|^m}f_k\big(\frac{\bx\bu\bx}{|\bx|^2}\big)d\bx'$ does not depend on the value of $y$. Hence, we can assume that
$$\displaystyle\int_{\R^{m-1}}\displaystyle\frac{y}{|\bx|^m}f_k\bigg(\displaystyle\frac{\bx\bu\bx}{|\bx|^2}\bigg)d\bx'=\frac{P(\bu)}{c_{m,k}},$$
where $P(\bu)$ is a function of $\bu$ and $c_{m,k}$ is a non-zero constant only depending on $m,k$, and it will be determined later on. On the one hand, we have
\begin{eqnarray}\label{eqn1}
\int_{0}^{\infty}\frac{1}{1+y^2}dy\displaystyle\int_{\R^{m-1}}\displaystyle\frac{y}{|\bx|^m}f_k\bigg(\displaystyle\frac{\bx\bu\bx}{|\bx|^2}\bigg)d\bx'%=\frac{P(\bu)}{c_{m,k}}\int_{0}^{\infty}\frac{1}{1+y^2}dy
=\frac{\pi }{2c_{m,k}}P(\bu).
\end{eqnarray}
On the other hand, if we denote $\bx=r\bs{\zeta}$, where $\bs{\zeta}=(\zeta_1,\cdots,\zeta_m)\in\Smone$, then we obtain
\begin{align}\label{eqn2}
&\int_{0}^{\infty}\frac{1}{1+y^2}dy\displaystyle\int_{\R^{m-1}}\displaystyle\frac{y}{|\bx|^m}f_k\bigg(\displaystyle\frac{\bx\bu\bx}{|\bx|^2}\bigg)d\bx'
=\int_{\R^m_+}\frac{y}{1+y^2}|\bx|^{-m}f_k\bigg(\displaystyle\frac{\vw\bu\vw}{|\bx|^2}\bigg)d\bx\nonumber\\
=&\int_{\Smone_+}\int_0^{\infty}\frac{\zeta_m}{1+(r\zeta_m)^2}drf_k(\bs{\zeta}\bu\bs{\zeta})dS(\bs{\zeta})=\frac{\pi}{2}\int_{\Smone_+}f_k(\bs{\zeta}\bu\bs{\zeta})dS(\bs{\zeta}).
\end{align} 
Further, we notice that $\bs{\zeta}\bu\bs{\zeta}$ is invariant under $\bs{\zeta}\rightarrow -\bs{\zeta}$. Therefore, one obtains
\be
\int_{\Smone_+}f_k(\bs{\zeta}\bu\bs{\zeta})dS(\bs{\zeta})=\frac{1}{2}\int_{\Smone}f_k(\bs{\zeta}\bu\bs{\zeta})dS(\bs{\zeta}).
\ee 
From \eqref{eqn1},\eqref{eqn2} and \cite[Lemma 6]{Dunkl}, we observe that
\be
\frac{\pi }{2c_{m,k}}P(\bu)=\frac{\pi}{4}\cdot\frac{(m-2)\omega_m}{m+2k-2}  f_k(\bu).
\ee
Therefore, we have $P(\bu)=f_k(\bu)$ and $c_{m,k}=\frac{2(m+2k-2)}{(m-2)\omega_m} $ as desired.
\end{proof}
Let $Z_k(\bu,\bv)$ be the reproducing kernel of the spherical harmonics (see \cite{Axler}) in the sense that 
\be
f(\bv)=\int_{\mathbb{S}^{m-1}}Z_k(\bu,\bv)f(\bu)dS(\bu),\ \text{for all}\ f(\bv)\in\mathcal{H}_k.
\ee
Notice that a bosonic Laplacian $\Dtwo$ given in (\ref{Dtwo}) is a second-order differential operator with respect to $\bx$, then $y Z_k(\bu,\bv)$ is a trivial null solution of $\Dtwo$. Further, in \cite{DeBie,DWR}, the authors showed that $\Dtwo$ is a second-order conformally invariant differential operator, which is particularly conformally invariant under the following special conformal transformation $K:\ f(\vw,\bu)\mapsto K[f](\vw,\bu):=|\bx|^{2-m}f\big(\frac{\vw}{|\bx|^2},\frac{\vw\bu\vw}{|\bx|^2}\big).$ We apply $K$ to $yZ_k(\bu,\bv)$ to obtain a non-trivial null solution of $\Dtwo$ given by
\be
K[yZ_k(\bu,\bv)]=|\bx|^{2-m}\frac{y}{|\bx|^2}Z_k\bigg(\frac{\vw\bu\vw}{|\bx|^2},\bv\bigg)=\frac{y}{|\bx|^{m}}Z_k\bigg(\frac{\vw\bu\vw}{|\bx|^2},\bv\bigg).
\ee
\par
Now, let $\bt\in\R^{m-1},\ \vt=(\bt,0)=(t_1,\cdots,t_{m-1},0),\ \bx\in\R^m_+,\ \bu\in\Smone,\ \bv\in\Bm$, and set
\begin{eqnarray}\label{Pkernel}
P_H(\bx,\vt,\bu,\bv)=c_{m,k}\frac{y}{|\bx-\vt|^m}Z_k\bigg(\frac{(\vw-\vt)\bu(\vw-\vt)}{|\bx-\vt|^2},\bv\bigg).
\end{eqnarray}
The function $P_H$ is called the \emph{Poisson kernel} of bosonic Laplacians in the upper-half space. Notice that, in accordance to Lemma \ref{lemma1}, we have
\be
\int_{\R^{m-1}}P_H(\bx,\vt,\bu,\bv)d\bt=Z_k(\bu,\bv) \quad \bx\in\R^m_+,\ \bu\in\Smone,\ \bv\in\Bm.
\ee
Now, we introduce an approximation property of the Poisson kernel $P_H$.
\begin{proposition}\label{prop0}
Let $\bs{a}\in\R^{m-1}$, $\vw\in\R^m_+$, $\bv\in\Bm$ and $\delta>0$, then we have
\begin{equation} \label{limPH-1}
\lim_{\vx \to \bs{a}}\int_{|\bt-\bs{a}|>\delta}\int_{\Smone}P_H(\vw,\vt,\bu,\bv)dS(\bu)d\bt = 0.
\end{equation}
\end{proposition}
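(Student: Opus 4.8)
The plan is to reduce \eqref{limPH-1} to the classical fact that the Poisson kernel of the Laplacian in the upper-half space concentrates at its base point, after removing the zonal-harmonic factor of $P_H$ by a uniform bound. Throughout I would write $\bx=(\bx',y)$ with $\bx'\in\R^{m-1}$ and $y>0$, so that $\bx\to\bs{a}$ is the same as $\bx'\to\bs{a}$ in $\R^{m-1}$ together with $y\to0^+$.

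First I would note that for every $\bu\in\Smone$ and every nonzero $\bom\in\R^m$ the element $\bom\,\bu\,\bom/|\bom|^2$ is again a unit vector of $\R^m$: it is, up to a sign, the reflection of $\bu$ across the hyperplane orthogonal to $\bom$, and reflections preserve norm. Hence the first argument of $Z_k$ in $P_H(\bx,\vt,\bu,\bv)$ always lies on $\Smone$, and since $Z_k$ is a polynomial in its two arguments it is bounded on $\Smone\times\Bm$, say by $M_k:=\sup\{|Z_k(\bom,\bv)|:\bom\in\Smone,\ \bv\in\Bm\}<\infty$. Combining this with $c_{m,k}>0$ (which holds since $m\geq3$) gives
\[
\left|\int_{|\bt-\bs{a}|>\delta}\int_{\Smone}P_H(\bx,\vt,\bu,\bv)\,dS(\bu)\,d\bt\right|\leq c_{m,k}\,M_k\,\omega_m\int_{|\bt-\bs{a}|>\delta}\frac{y}{|\bx-\vt|^m}\,d\bt,
\]
so it suffices to prove that the right-hand integral tends to $0$ as $\bx\to\bs{a}$.

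For that I would use the standard tail estimate: once $|\bx'-\bs{a}|<\delta/2$ then, for every $\bt$ with $|\bt-\bs{a}|>\delta$, recalling $\vt=(\bt,0)$ we have $|\bx-\vt|\geq|\bx'-\bt|\geq|\bt-\bs{a}|-|\bx'-\bs{a}|\geq\tfrac12|\bt-\bs{a}|$, and hence
\[
\int_{|\bt-\bs{a}|>\delta}\frac{y}{|\bx-\vt|^m}\,d\bt\leq 2^m\,y\int_{|\bt-\bs{a}|>\delta}\frac{d\bt}{|\bt-\bs{a}|^m}=\frac{2^m\,\omega_{m-1}}{\delta}\,y,
\]
where $\omega_{m-1}$ denotes the surface area of the unit sphere in $\R^{m-1}$; as $\bx\to\bs{a}$ we have $y\to0^+$, so this bound vanishes and \eqref{limPH-1} follows. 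The only step that is not a routine tail estimate is the first one, namely confirming that the Clifford-valued argument $(\bx-\vt)\bu(\bx-\vt)/|\bx-\vt|^2$ of $Z_k$ really stays on $\Smone$ so that $Z_k$ can be pulled uniformly outside both integrals; this is the place where the algebraic form of $P_H$ enters, and it plays the same role as the constancy of the numerator $y$ in the scalar case. I do not anticipate any genuine obstacle beyond that bookkeeping.
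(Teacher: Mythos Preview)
Your proof is correct and follows essentially the same route as the paper: bound the zonal harmonic $Z_k$ uniformly (the paper uses the explicit bound $|Z_k(\bu,\bv)|\leq\dim\HK$ from \cite[Proposition~5.27]{Axler}, you use a generic supremum $M_k$), reducing the estimate to the tail of the classical upper-half-space Poisson kernel $\frac{y}{|\bx-\vt|^m}$. The only difference is cosmetic: the paper then cites \cite[Lemma~1.3.5(c)]{Ban} for that tail vanishing, whereas you carry out the explicit $|\bx-\vt|\geq\tfrac12|\bt-\bs{a}|$ computation and integrate---your version is more self-contained.
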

\begin{proof}
Notice that
\be
&&\bigg{|}\int_{|\bt-\bs{a}|>\delta}\int_{\Smone}P_H(\vw,\vt,\bu,\bv)dS(\bu)d\bt\bigg{|}\\
&\leq&\int_{|\bt-\bs{a}|>\delta}\int_{\Smone}\bigg{|}c_{m,k}\frac{y}{|\bx-\vt|^m}Z_k\bigg(\frac{(\vw-\vt)\bu(\vw-\vt)}{|\bx-\vt|^2},\bv\bigg)\bigg{|}dS(\bu)d\bt\\
&\leq&c_{m,k}\omega_m\dim\HK\int_{|\bt-\bs{a}|>\delta}\frac{y}{|\bx-\vt|^m}d\bt,
\ee
where we have used the estimate $|Z_k(\bu,\bv)|\leq \dim\HK$ for $\bu,\bv\in\Smone$ (see  \cite[Proposition 5.27]{Axler}). By \cite[Lemma 1.3.5 (c)]{Ban}, we obtain \eqref{limPH-1}.
\end{proof}~\\
 For $1\leq p<\infty$, let $L^p(\R^{m-1}\times\Bm,\HK)$ be the space of Borel measurable functions $f$ on $\R^{m-1}\times\Bm$ for which
\begin{eqnarray}\label{norm}
\|f\|_{L^p(\R^{m-1}\times\Bm,\HK)}=\bigg(\int_{\R^{m-1}}\int_{\Smone}|f(\bw,\bu)|^pdS(\bu)d\bx'\bigg)^{1/p}<+\infty.
\end{eqnarray}
$L^{\infty}(\R^{m-1}\times\Bm,\HK)$ consists of the Borel measurable functions $f$ on $\R^{m-1}\times\Bm$ for which $\|f\|_{\infty}<+\infty$, where $\|\cdot \|_{\infty}$ stands for the essential supremum norm on $\R^{m-1}\times\Bm$. One might notice that, when we define the norm of $L^p$, the integration of $\bu$ is over $\Smone$ instead of $\Bm$, although $\bu\in\Bm$. This is because $f(\bw,\bu)$ is a homogeneous harmonic polynomial of degree $k$ in $\bu$, so the norm defined on $\Smone$ is the same as the norm defined in $\Bm$ up to a multiplicative constant. Indeed, one can easily see that 
\begin{equation} \label{equivalence} \|f\|^p_{L^p(\R^{m-1}\times\Bm,\HK)}=(m+kp)^{-1}\|f\|^p_{L^p(\R^{m-1}\times\Smone,\HK)}.
\end{equation}
The \emph{Poisson integral} of $f\in L^p(\R^{m-1}\times\Bm,\HK)$, $p\in[1,\infty]$, is given by
\begin{equation} \label{Poisson-1}
P_H[f](\vw,\bv)=\int_{\R^{m-1}}\int_{\Smone}P_H(\vw,\vt,\bu,\bv)f(\bt,\bu)dS(\bu)d\bt,
\end{equation}
where $\bx\in\R^m_+,\ \bv\in\Bm$. By the boundedness of $Z_k(\bu,\bv)$, the fact that $\frac{y}{|\bx-\vt|^m}\in L^q(\R^{m-1}),\ 1/p+1/q=1$, and H\"{o}lder's inequality, we observe that $P_H[f]$ is well-defined for every $\vw\in\R^m_+,\ \bv\in\Bm$.
%%%%%%%%%
\subsection{Dirichlet problem in the upper-half space}
Now, we claim that the Poisson integral given in \eqref{Poisson-1} solves the following Dirichlet problem in $\R^m_+$.
\begin{theorem}[Dirichlet problem in $\R^{m}_+$ with continuous and bounded data]\label{theorem1}\par
Suppose $f \in C(\R^{m-1}\times\Bm, \HK) \cap L^\infty(\R^{m-1}\times\Bm, \HK)$. Define $g$ on $\overline{\R^{m}_+}\times\Bm$ by
\[
g(\vx,\bv)=\begin{cases}
P_H[f](\vx,\bv),\ &\text{if}\ \vx\in\R^m_+, \quad \bv \in \Bm,\\
f(\bw,\bv),\ &\text{if}\ \bw\in\R^{m-1}, \quad \bv \in \Bm.
\end{cases}
\]
Then $g$ is continuous in $\overline{\R^m_+}$ with respect to $\vw$, $\Dtwo g=0$ in $\R^m_+\times\Bm$ and 
\begin{equation} \label{g<f-1} \|g\|_{L^{\infty}(\R^m_+\times\Bm)}\leq a_{m,k}\|f\|_{L^{\infty}(\R^{m-1}\times\Bm)},
\end{equation} 
where $a_{m,k}>0$ is a constant only depending on $m$ and $k$.
\end{theorem}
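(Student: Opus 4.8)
The plan is to verify the three assertions in turn, modelled on the classical Poisson integral but accounting for the structure of $\mathcal{D}_k$. For the equation $\mathcal{D}_k g=0$, I would first note that for each fixed $\vt\in\R^{m-1}$ and $\bu\in\Smone$ the map $(\vx,\bv)\mapsto P_H(\vx,\vt,\bu,\bv)$ is a null solution of $\mathcal{D}_k$ on $\R^m_+\times\R^m$: since $(\vx-\vt)\bu(\vx-\vt)/|\vx-\vt|^2$ is the reflection of $\bu$ across $(\vx-\vt)^{\perp}$ and $Z_k$ is $O(m)$--invariant and symmetric, $P_H$ is, after relabeling, a translate in $\vx$ of the image under the conformal transformation $K$ of the trivial null solution $yZ_k$, and $\mathcal{D}_k$ is conformally and translation invariant (cf.\ the discussion before \eqref{Pkernel}); moreover $P_H$ is smooth on $\R^m_+\times\R^m$, its only singularity being on $\{y=0\}$. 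It then suffices to differentiate under the integral in \eqref{Poisson-1}. This is legitimate because on any set $\{\vx:y_0\le y\le y_1,\ |\vx'|\le R\}$ with $\bv$ bounded, the derivatives of $P_H$ in $(\vx,\bv)$ up to order two are dominated by a constant multiple of $y\,|\vx-\vt|^{-m}\in L^1(\R^{m-1},d\bt)$ --- indeed $Z_k$ is polynomial in each slot, the map $\vx\mapsto(\vx-\vt)\bu(\vx-\vt)/|\vx-\vt|^2$ is smooth with $\vx$--derivatives $O(|\vx-\vt|^{-1})$, and $|\vx-\vt|\ge y\ge y_0$ there. Hence $\mathcal{D}_k g=\int_{\R^{m-1}}\int_{\Smone}(\mathcal{D}_k P_H)\,f\,dS(\bu)\,d\bt=0$ on $\R^m_+\times\Bm$; the same bounds show $g(\vx,\cdot)\in\HK$ for each $\vx$, so the equation is meaningful, and that $g$ is smooth in $\R^m_+$.

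For the boundary behaviour the crucial observation is that $P_H$ exactly reproduces boundary data independent of the first variable: for $h\in\HK$, Fubini's theorem together with the identity $\int_{\R^{m-1}}P_H(\vx,\vt,\bu,\bv)\,d\bt=Z_k(\bu,\bv)$ recorded after \eqref{Pkernel} and the reproducing property of $Z_k$ yield $P_H[h](\vx,\bv)=\int_{\Smone}Z_k(\bu,\bv)h(\bu)\,dS(\bu)=h(\bv)$ for every $\vx\in\R^m_+$. Fix $\bs{a}\in\R^{m-1}$, $\bv\in\Bm$, $\varepsilon>0$, and apply this with $h=f(\bs{a},\cdot)\in\HK$ to get
\[
g(\vx,\bv)-f(\bs{a},\bv)=\int_{\R^{m-1}}\int_{\Smone}P_H(\vx,\vt,\bu,\bv)\,\bigl(f(\bt,\bu)-f(\bs{a},\bu)\bigr)\,dS(\bu)\,d\bt .
\]
I would split the $\bt$--integral at $|\bt-\bs{a}|=\delta$. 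On $\{|\bt-\bs{a}|\le\delta\}$, uniform continuity of $f$ on the compact set $\{\bt\in\R^{m-1}:|\bt-\bs{a}|\le1\}\times\Smone$ gives $|f(\bt,\bu)-f(\bs{a},\bu)|<\varepsilon$ for $\delta$ small; since $\int_{\R^{m-1}}\int_{\Smone}|P_H|\,dS(\bu)\,d\bt\le c_{m,k}\,\omega_m\,(\dim\HK)\,\gamma_m$, with $\gamma_m:=\int_{\R^{m-1}}y\,|\vx-\vt|^{-m}\,d\bt$ independent of $\vx$ and $|Z_k|\le\dim\HK$, this contributes at most $\varepsilon\,c_{m,k}\,\omega_m\,(\dim\HK)\,\gamma_m$. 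On $\{|\bt-\bs{a}|>\delta\}$ I would bound $|f(\bt,\bu)-f(\bs{a},\bu)|\le 2\|f\|_\infty$ and invoke Proposition \ref{prop0}, whose proof in fact estimates $\int_{\{|\bt-\bs{a}|>\delta\}}\int_{\Smone}|P_H|\,dS(\bu)\,d\bt$, to make this term $<\varepsilon$ once $\vx$ is close to $\bs{a}$ in $\overline{\R^m_+}$. Letting $\varepsilon\to0$ gives $\lim_{\vx\to\bs{a}}g(\vx,\bv)=f(\bs{a},\bv)=g(\bs{a},\bv)$; combined with smoothness of $P_H[f]$ in $\R^m_+$ and continuity of $f$ along $\R^{m-1}$, this yields continuity of $g$ on $\overline{\R^m_+}$ in $\vx$.

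For the estimate, the same mass bound gives, for $\vx\in\R^m_+$ and $\bv\in\Bm$,
\[
|g(\vx,\bv)|\le\int_{\R^{m-1}}\int_{\Smone}|P_H(\vx,\vt,\bu,\bv)|\,|f(\bt,\bu)|\,dS(\bu)\,d\bt\le c_{m,k}\,\omega_m\,(\dim\HK)\,\gamma_m\,\|f\|_{L^\infty(\R^{m-1}\times\Bm)},
\]
where homogeneity in $\bu$ lets one pass between the supremum of $|f|$ over $\Smone$ and over $\Bm$; thus \eqref{g<f-1} holds with $a_{m,k}=c_{m,k}\,\omega_m\,(\dim\HK)\,\gamma_m$. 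I expect the boundary step to be the main obstacle: unlike the classical Poisson kernel, $P_H$ is not a probability kernel --- its total $\bt$--mass equals $Z_k(\bu,\bv)$ rather than $1$ --- so the standard approximate-identity argument cannot be applied verbatim, and one must first extract the reproduction identity $P_H[h]=h$ for first-variable-independent $h\in\HK$; a secondary technicality is the justification of differentiating under the integral sign for the second-order operator $\mathcal{D}_k$, which relies on the uniform kernel bounds noted above.
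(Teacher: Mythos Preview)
Your proposal is correct and follows essentially the same route as the paper: differentiation under the integral using local integrable domination of the kernel derivatives to get $\mathcal{D}_k g=0$; the reproduction identity $\int_{\R^{m-1}}P_H\,d\bt=Z_k(\bu,\bv)$ combined with the reproducing property of $Z_k$ to subtract $f(\bs{a},\cdot)$, followed by the near/far split in $\bt$ with Proposition~\ref{prop0} handling the far piece; and the $L^\infty$ bound via $|Z_k|\le\dim\HK$ together with the fact that $\int_{\R^{m-1}}y\,|\vx-\vt|^{-m}\,d\bt$ is a dimensional constant. Your commentary that $P_H$ is not a probability kernel and that one must first extract the reproduction identity is exactly the point the paper encodes in \eqref{fav}.
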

\begin{proof}
First, from the expression of $P_H(\vx,\vt,\bu,\bv)$ given in (\ref{Pkernel}), if we consider a fixed derivative of $P_H(\vx,\vt,\bu,\bv)$ with respect to $y,\ x_j,\ j=1,\cdots,m-1$ or $u_j,\ j=1,\cdots, m$, and assume $V$ is a compact subset in $\R^m_+$, then for any $(\bw,y)\in V$ and $\bu\in\Smone$, the supremum of the derivative as a function of $\vt$ and $\bv$ , is integrable. This is due to the fact that $P_H(\vx,\vt,\bu,\bv)$ is nonsingular when $\vx$ is in a compact set in $\R^m_+$. Hence, we can differentiate under the integral sign to obtain that $\Dtwo P_H[f]=0$. 

Next we will show that $g(\vx,\bv)$ is continuous in $\R^m_+$ with respect to $\vx$. Let $\bs{a} \in \R^{m-1}$ and $\bv \in \Bm$. By Lemma \ref{lemma1}, we have 
\begin{equation} \label{fav} \begin{aligned}
f(\bs{a},\bv)=&\int_{\Smone}Z_k(\bu,\bv)f(\bs{a},\bu)dS(\bu)\\
%&=c_{m,k}\int_{\Smone}\int_{\R^{m-1}}\frac{y}{|\bx-\vt|^m}Z_k\bigg(\displaystyle\frac{(\vx-\vt)\bu(\vx-\vt)}{|\bx-\vt|^2},\bv\bigg)d\bt f(\bs{a},\bu)dS(\bu)\\
=&\int_{\R^{m-1}}\int_{\Smone}P_H(\vx,\vt,\bu,\bv)f(\bs{a},\bu)dS(\bu)d\bt.
\end{aligned} \end{equation}
Let $\delta>0$. From \eqref{fav} and the estimate $|Z_k(\bu,\bv)|\leq\dim\HK$, we obtain 
\begin{align*} 
&|g(\vx,\bv)-f(\bs{a},\bv)|=|P_H[f](\vx,\bv)-f(\bs{a},\bv)|\\
&\leq \int_{\R^{m-1}}\int_{\Smone}|P_H(\vx,\vt,\bu,\bv)(f(\bt,\bu)-f(\bs{a},\bu))|dS(\bu)d\bt \\
&\leq c_{m,k}\dim\HK\cdot\int_{|\bt-\bs{a}|<\delta}\int_{\Smone}\frac{y}{|\bx-\vt|^m}\cdot |f(\bt,\bu)-f(\bs{a},\bu)|dS(\bu)d\bt\\
&+c_{m,k}\dim\HK\cdot\int_{|\bt-\bs{a}|>\delta}\int_{\Smone}\frac{y}{|\bx-\vt|^m}\cdot |f(\bt,\bu)-f(\bs{a},\bu)|dS(\bu)d\bt.
\end{align*}
Take arbitrarily $\epsilon>0$, we can choose $\delta>0$ small enough such that the first  integral is smaller than $\epsilon$ due to the facts that $f(\bt,\bu)$ is continuous at $\bs{a}$ and $\int_{\R^{m-1}}\frac{y}{|\bx-\vt|^m}d\bt$ is bounded. The second integral above approaching zero when $\vx\rightarrow \bs{a}$ can be immediately obtained from Proposition \ref{prop0}. This completes the proof of the continuity.
\par
Finally, we will prove \eqref{g<f-1}. For any $\vx\in\R^m_+$ and $\bv\in\Bm$, we have
\begin{align*}
&|g(\vx,\bv)|=\bigg{|} \int_{\R^{m-1}}\int_{\Smone}P_H(\vx,\vt,\bu,\bv)f(\bt,\bu)dS(\bu)d\bt\bigg{|}\\
\leq& c_{m,k}\int_{\R^{m-1}}\frac{y}{|\bx-\vt|^m}\int_{\Smone}\bigg{|}Z_k\bigg(\displaystyle\frac{(\vx-\vt)\bu(\vx-\vt)}{|\bx-\vt|^2},\bv\bigg)f(\bt,\bu)\bigg{|}dS(\bu)d\bt\\
\leq& c_{m,k}\dim\HK\cdot\omega_m\|f\|_{L^{\infty}(\R^{m-1}\times\Bm)}\int_{\R^{m-1}}\frac{y}{|\bx-\vt|^m}d\bt\\
=&:a_{m,k}\|f\|_{L^{\infty}(\R^{m-1}\times\Bm)}.
\end{align*}
The last second equality comes from \cite[Lemma 1.3.5]{Ban}. Hence, we obtain \eqref{g<f-1}.
\end{proof}
We now extend the result to the Dirichlet problem with $L^p$ data.
\begin{theorem}[Dirichlet problem with $L^p$ data]
Assume that $1\leq p< \infty$ and $f\in L^p(\R^{m-1}\times\Bm,\HK)$. Let $g(\vx,\bv)=P_H[f](\vx,\bv)=P_H[f](\bw,y,\bv)$, and for $y>0$, we set $g_y(\bw,\bv)=P_H[f](\bw,y,\bv)$. 
\begin{enumerate}
\item There holds $\Dtwo g=0$ in $\R^m_+\times\Bm$.
\item  There exists a constant $c'_{m,k}>0$ depending on $m$ and $k$ such that, for any $y>0$, we obtain $\|g_y\|_{L^p(\R^{m-1}\times\Bm,\HK)}\leq c'_{m,k}\|f\|_{L^p(\R^{m-1}\times\Bm,\HK)}.$
\item We have  $\|g_y-f\|_{L^p(\R^{m-1}\times\Bm,\HK)} \to 0$ as $y \to 0$.
\end{enumerate}
\end{theorem}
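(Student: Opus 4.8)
The plan is to establish the three assertions in turn, using throughout the mass identity $\int_{\R^{m-1}}\frac{y}{|\vw-\vt|^m}\,d\bt=\omega_m/2$ (independent of $y$ and of the base point, as in the proof of Theorem~\ref{theorem1}) together with one structural observation that drives all three estimates. Writing $\frac{(\vw-\vt)\bu(\vw-\vt)}{|\vw-\vt|^2}=R\bu$, where $R=R(\vw,\vt)\in O(m)$ is the reflection in the hyperplane orthogonal to $\vw-\vt$, the $O(m)$-invariance $Z_k(R\bu,\bv)=Z_k(\bu,R^{-1}\bv)$ of the reproducing kernel and the reproducing property give, for any $f(\vt,\cdot)\in\HK$,
\[
\int_{\Smone}Z_k\!\Big(\tfrac{(\vw-\vt)\bu(\vw-\vt)}{|\vw-\vt|^2},\bv\Big)f(\vt,\bu)\,dS(\bu)=\int_{\Smone}Z_k(\bu,R^{-1}\bv)f(\vt,\bu)\,dS(\bu)=f(\vt,R^{-1}\bv),
\]
so that $g_y(\bw,\bv)=c_{m,k}\int_{\R^{m-1}}\frac{y}{|\vw-\vt|^m}\,f(\vt,R^{-1}\bv)\,d\bt$, a form on which the $L^p$ theory becomes transparent.

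For item (1) I would differentiate under the integral sign in \eqref{Poisson-1}, exactly as in the proof of Theorem~\ref{theorem1}: on a compact $V\subset\R^m_+$ the point $\vw$ stays a positive distance from $\R^{m-1}$, so every fixed partial derivative of $P_H(\vw,\vt,\bu,\bv)$ in the components of $\vw$ or $\bv$ has supremum over $V$ bounded on $\R^{m-1}\times\Bm$ and $O(|\vt|^{-m})$ as $|\vt|\to\infty$; since $mp'>m-1$, this dominating function lies in $L^{p'}(\R^{m-1})$, and H\"older's inequality with $f\in L^p$ licenses differentiating under the integral to the orders needed. Because $P_H(\cdot,\vt,\bu,\cdot)$ is, for each fixed $\vt,\bu$, a null solution of $\Dtwo$ — up to the constant $c_{m,k}$ it is a $\vw$-translate of $K[yZ_k(\bv,\bu)]$ (rewrite the kernel via $Z_k(R\bu,\bv)=Z_k(\bu,R^{-1}\bv)$), and $\Dtwo$ has $\vw$-independent coefficients — we get $\Dtwo g=\int_{\R^{m-1}}\int_{\Smone}(\Dtwo P_H)\,f\,dS(\bu)\,d\bt=0$.

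For item (2), starting from $g_y(\bw,\bv)=c_{m,k}\int_{\R^{m-1}}\frac{y}{|\vw-\vt|^m}f(\vt,R^{-1}\bv)\,d\bt$, I would split $\frac{y}{|\vw-\vt|^m}=\big(\frac{y}{|\vw-\vt|^m}\big)^{1/p'}\big(\frac{y}{|\vw-\vt|^m}\big)^{1/p}$ and apply H\"older's inequality in $\vt$ (trivial when $p=1$) to obtain $|g_y(\bw,\bv)|^p\le c_{m,k}^p(\omega_m/2)^{p/p'}\int_{\R^{m-1}}\frac{y}{|\vw-\vt|^m}|f(\vt,R^{-1}\bv)|^p\,d\bt$. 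Integrating in $\bv$ over $\Smone$ and using that $R^{-1}\in O(m)$ preserves surface measure gives $\int_{\Smone}|f(\vt,R^{-1}\bv)|^p\,dS(\bv)=\int_{\Smone}|f(\vt,\bv)|^p\,dS(\bv)$, which is independent of $\bw$ and $y$; then Tonelli's theorem and $\int_{\R^{m-1}}\frac{y}{|\vw-\vt|^m}\,d\bx'=\omega_m/2$ yield $\|g_y\|_{L^p(\R^{m-1}\times\Smone)}\le c_{m,k}\tfrac{\omega_m}{2}\|f\|_{L^p(\R^{m-1}\times\Smone)}$, and \eqref{equivalence} converts this into the claimed bound with $c'_{m,k}=c_{m,k}\omega_m/2=(m+2k-2)/(m-2)$.

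For item (3) I would argue by density. Since $\HK$ is finite dimensional, $C_c(\R^{m-1})\otimes\HK$ is dense in $L^p(\R^{m-1}\times\Bm,\HK)$ for $1\le p<\infty$; fix $f'$ in this class, supported in $\{|\bx'|\le M\}$ in the first variable. By Theorem~\ref{theorem1}, $P_H[f']$ extends continuously to $\overline{\R^m_+}$ with boundary value $f'$ and $\|P_H[f']\|_\infty\le a_{m,k}\|f'\|_\infty$, so $(g')_y\to f'$ pointwise with a uniform bound; dominated convergence on the finite-measure set $\{|\bx'|\le 2M\}\times\Smone$ sends that part of $\|(g')_y-f'\|_{L^p}^p$ to $0$, while for $|\bx'|>2M$ the support condition forces $|(g')_y(\bw,\bv)|\le C_{M,f'}\,y\,|\bx'|^{-m}$, so that part is $\le C\,y^p\to0$ (again by $mp>m-1$); hence $\|(g')_y-f'\|_{L^p}\to0$. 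For general $f$ I would then split $\|g_y-f\|_{L^p}\le\|P_H[f-f']_y\|_{L^p}+\|(g')_y-f'\|_{L^p}+\|f-f'\|_{L^p}$, bound the first term by $c'_{m,k}\|f-f'\|_{L^p}$ via item~(2), and let $y\to0$ followed by $f'\to f$. The step I expect to be the main obstacle is the collapse in the first display: a crude bound of the $\bu$-integral by $\dim\HK\int_{\Smone}|f|$ leaves the factor $\big\|y/|\vw-\vt|^m\big\|_{L^p(\R^{m-1})}\sim y^{-(m-1)/p'}$, which blows up as $y\to0$ and would destroy both the uniform $L^p$ bound and the convergence; identifying this integral with $f(\vt,R^{-1}\bv)$ for an orthogonal $R$ is precisely what makes the constant depend on $m,k$ only and keeps the analysis $y$-uniform.
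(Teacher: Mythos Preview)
Your proof is correct, and its central device --- collapsing the inner $\bu$-integral via $Z_k(R\bu,\bv)=Z_k(\bu,R^{-1}\bv)$ and the reproducing property, so that
\[
g_y(\bw,\bv)=c_{m,k}\int_{\R^{m-1}}\frac{y}{|\vw-\vt|^m}\,f(\bt,R^{-1}\bv)\,d\bt,
\]
--- is genuinely different from the paper's argument. The paper never performs this collapse; instead it applies Minkowski's integral inequality to the full Poisson integral, bounds $|Z_k|\le\dim\HK$, and then uses translation invariance of the $L^p$ norm in $\bx'$ to prove (2), and continuity of translations in $L^p$ (again via Minkowski) to prove (3). Your approach buys a sharp constant $c'_{m,k}=(m+2k-2)/(m-2)$ free of the $\dim\HK$ factor, and makes the $\bv$-dependence transparent (it reduces to an orthogonal change of variables on $\Smone$). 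The paper's approach, on the other hand, is a single-tool argument: Minkowski handles both (2) and (3) directly, without needing the density/tail-estimate step you carry out in (3). Your closing remark that the crude $|Z_k|$ bound would force a factor $\|y/|\vw-\vt|^m\|_{L^p}\sim y^{-(m-1)/p'}$ is not quite right --- that blow-up would arise if one used H\"older in $\bt$, but the paper sidesteps it entirely by using Minkowski instead, so the collapse is elegant rather than essential.
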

\begin{proof}
For a fixed $\vt=(\bt,0)$ with $\bt \in\R^{m-1}$, we have $\Dtwo P_H(\vx,\vt,\bu,\bv)=0$. Further, from the expression that $P_H(\vx,\vt,\bu,\bv)$  in (\ref{Pkernel}), if we consider a fixed derivative of $P_H(\vx,\vt,\bu,\bv)$ with respect to $y,\ x_j,\ j=1,\cdots,m-1$ or $u_j,\ j=1,\cdots, m$, and assume $V$ is a compact subset in $\R^m_+$, then for any $(\bw,y)\in V$ and $\bu\in\Smone$, the supremum of the derivative as a function of $\vt$ and $\bv$ , is in $L^q(\R^{m-1}\times\Bm,\HK)$, where $1/p+1/q=1$. This is because $P_H(\vx,\vt,\bu,\bv)$ is nonsingular when $\vx$ is in a compact set in $\R^m_+$. This allows us to differentiate under the integral sign to conclude that $\Dtwo P_H[f]=0$. 

Next, we prove statement (2). By the Minkowski integral inequality, we have
\begin{align*}
&\|g_y\|_{L^p(\R^{m-1}\times\Bm,\HK)}\\
=&c_{m,k}\bigg(\int_{\R^{m-1}}\int_{\Smone}\bigg{|}\int_{\R^{m-1}}\int_{\Smone}P_H(\vx,\vt,\bu,\bv)f(\bt,\bu)dS(\bu)d\bt\bigg{|}^pdS(\bv)d\bx'\bigg)^{\frac{1}{p}}\\
\leq& c'_{m,k}\int_{\R^{m-1}}\int_{\Smone}\bigg(\int_{\R^{m-1}}\int_{\Smone}\bigg{|}\displaystyle\frac{y}{|\bs{\eta}|^m}f(\vx-\bs{\eta},\bu)\bigg{|}^pdS(\bv)d\bx'\bigg)^{\frac{1}{p}}dS(\bu)d{\bs{\eta}}'\\
%&= c_{m,k}\omega_m\dim\HK\int_{\R^{m-1}}\frac{y}{|\bs{\eta}|^m}\int_{\Smone}\bigg(\int_{\R^{m-1}}|f(\vx-\bs{\eta},\bu)|^pd\bx'\bigg)^{\frac{1}{p}}dS(\bu)d{\bs{\eta}}' \\
\leq& c_{m,k}''\int_{\R^{m-1}}\frac{y}{|\bs{\eta}|^m}d{\bs{\eta}}'\bigg(\int_{\Smone}\int_{\R^{m-1}}|f(\vx-\bs{\eta},\bu)|^pd\bx'dS(\bu)\bigg)^{\frac{1}{p}},
\end{align*}
where $\bs{\eta}=(\bx'-\bt,y)=:(\bs{\eta}',y)$. This yields statement (2).
\par

To prove statement (3), we denote $f_{\bs{\eta}}(\vx,\bu)=f(\vx-\bs{\eta},\bu)$. Let $\epsilon>0$ and in accordance to the continuity of the translation operator in the $L^p$ norm with $\ p<\infty$, we can choose $\delta>0$ so that $\|f_{\bs{\eta}}-f\|_{L^p(\R^{m-1}\times\Bm,\HK)}\leq \epsilon$, when $|\bs{\eta}|<\delta$. Then by the Minkowski integral inequality and the estimate $|P_H(\bs{\eta},0,\bu,\bv)|\leq \dim\HK\displaystyle\frac{y}{|\bs{\eta}|^m}$, we obtain
\begin{align*}
&\|g_y-f\|_{L^p(\R^{m-1}\times\Bm,\HK)}\\
%=&\bigg(\int_{\R^{m-1}}\int_{\Smone}\bigg{|}\int_{\R^{m-1}}\int_{\Smone}P_H(\bs{\eta},0,\bu,\bv)(f(\vx-\bs{\eta},\bu)-f(\bx',\bu))dS(\bu)d{\bs{\eta}}'\bigg{|}^pdS(\bv)d\bx'\bigg)^{\frac{1}{p}}\\
&\leq\int_{\R^{m-1}}\int_{\Smone}\left(\int_{\R^{m-1}}\int_{\Smone}|P_H(\bs{\eta},0,\bu,\bv)(f_{\bs{\eta}}(\vx,\bu)-f(\bx',\bu))|^pdS(\bv)d\bx'\right)^{\frac{1}{p}} \\
&\hspace{11.5cm}   dS(\bu)d{\bs{\eta}}' \\
%&\leq&\dim\HK\int_{\R^{m-1}}\int_{\Smone}\bigg(\int_{\R^{m-1}}\int_{\Smone}\bigg{|}\displaystyle\frac{y}{|\bs{\eta}|^m}(f(\vx-\bs{\eta},\bu)-f(\bw,\bu))\bigg{|}^pdS(\bv)d\bx'\bigg)^{\frac{1}{p}}dS(\bu)d{\bs{\eta}}'\\
&\leq a_{m,k}'\int_{\R^{m-1}}\frac{y}{|\bs{\eta}|^m}\int_{\Smone}\left(\int_{\R^{m-1}}|f_{\bs{\eta}}(\vx,\bu)-f(\bw,\bu)|^pd\bx'\right)^{\frac{1}{p}}dS(\bu)d{\bs{\eta}}'\\
%&\leq& %a_{m,k}'\int_{\R^{m-1}}\frac{y}{|\bs{\eta}|^m}\bigg(\int_{\R^{m-1}}\in%t\displaylimits_{\Smone}\bigg{|}f(\vx-\bs{\eta},\bu)-f(\bw,\bu)\bigg{|}^pd\bx'dS(\bu)\bigg)^{\f%rac{1}{p}}d{\bs{\eta}}'\\
&=a_{m,k}'\int_{\R^{m-1}}\frac{y}{|\bs{\eta}|^m}\|f_{\bs{\eta}}-f\|_{L^p(\R^{m-1}\times\Bm,\HK)}d{\bs{\eta}}'.
\end{align*}
This implies
\begin{align*}
&\|g_y-f\|_{L^p(\R^{m-1}\times\Bm,\HK)}\\
=&a_{m,k}'\bigg(\int_{|\bs{\eta}|<\delta}\frac{y}{|\bs{\eta}|^m}\|f_{\bs{\eta}}-f\|_{L^p(\R^{m-1}\times\Bm,\HK)}d{\bs{\eta}}'\\
&\quad\quad+\int_{|\bs{\eta}|>\delta}\frac{y}{|\bs{\eta}|^m}\|f_{\bs{\eta}}-f\|_{L^p(\R^{m-1}\times\Bm,\HK)}d{\bs{\eta}}'\bigg)\\
\leq&a_{m,k}'\bigg(\epsilon\int_{|\bs{\eta}|<\delta}\frac{y}{|\bs{\eta}|^m}d{\bs{\eta}}'+2\|f\|_{L^p(\R^{m-1}\times\Bm,\HK)}\int_{|\bs{\eta}|>\delta}\frac{y}{|\bs{\eta}|^m}d{\bs{\eta}}'\bigg).
\end{align*}
By a similar argument as in Theorem \ref{theorem1}, we derive that the above two integrals approach zero when $y$ goes to zero. This completes the proof of the last statement.
\end{proof}
%%%%%%%%
\subsection{Dirichlet problem in the unit ball}
In this subsection, we will derive the Poisson kernel for Bosonic Laplacians in the unit ball by applying a particular Cayley transform, which is a M\"obius transformation mapping the unit ball to the upper-half space. We need the following technical lemma.
\begin{lemma}\label{lem1}
Let $\bx,\ze\in\R^m$ and $\phi(\bx)=(a\bx+b)(c\bx+d)^{-1}$ be a M\"{o}bius transformation. Then, one has
\begin{enumerate}
\item $|\phi(\bx)-\phi(\ze)|=|c\bx+d|^{-1}|\bx-\ze||c\ze+d|^{-1},$
\item $Z_k\bigg(\displaystyle\frac{(\phi(\bx)-\phi(\ze))\bu(\phi(\bx)-\phi(\ze))}{|\phi(\bx)-\phi(\ze)|^2},\bv\bigg)=Z_k\bigg(\displaystyle\frac{(\bx-\ze)\bs{\omega}(\bx-\ze)}{|\bx-\ze|^2},\bs{\nu}\bigg),$
\end{enumerate}
where
\begin{eqnarray*}
\bs{\omega}=\displaystyle\frac{\widetilde{(c\ze+d)}\bu(c\ze+d)}{|c\ze+d|^2}\quad \text{and}\quad \bs{\nu}=\displaystyle\frac{\widetilde{(c\bx+d)}\bv(c\bx+d)}{|c\bx+d|^2}.
\end{eqnarray*}
\end{lemma}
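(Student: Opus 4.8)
The plan is to read both parts off the classical Clifford-algebraic description of M\"obius transformations by Ahlfors--Vahlen matrices. Abbreviate $\alpha:=c\bx+d$, $\beta:=c\ze+d$ and $p:=\bx-\ze$. Everything rests on the factorization
\[
\phi(\bx)-\phi(\ze)=\Delta\,\widetilde{(c\ze+d)}^{-1}(\bx-\ze)(c\bx+d)^{-1}=\Delta\,\widetilde{\beta}^{-1}\,p\,\alpha^{-1},
\]
where $\Delta\in\R\setminus\{0\}$ is the pseudo-determinant of the Vahlen matrix of $\phi$ --- a central quantity, which we normalise so that $|\Delta|=1$. This identity is classical (see \cite{Porteous}); if one wants a self-contained derivation, when $c$ is invertible it follows from $\phi(\bx)=ac^{-1}+(b-ac^{-1}d)\alpha^{-1}$ and $\alpha^{-1}-\beta^{-1}=-\beta^{-1}c\,p\,\alpha^{-1}$ after rewriting $-(b-ac^{-1}d)\beta^{-1}c$ as $\Delta\widetilde{\beta}^{-1}$ via the Vahlen relations $a\widetilde b,\,c\widetilde d\in\R^m$, and the remaining cases follow by composing along the Iwasawa decomposition, both $(1)$ and $(2)$ being multiplicative under composition. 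Granting the factorization, part $(1)$ drops out by taking moduli: $s\mapsto|s|$ is multiplicative on the Clifford group with $|\widetilde s|=|s|$ and $|s^{-1}|=|s|^{-1}$, so $|\phi(\bx)-\phi(\ze)|=|c\ze+d|^{-1}\,|\bx-\ze|\,|c\bx+d|^{-1}$.

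For part $(2)$ I would first isolate two structural facts. Since $\phi$ permutes $\R^m\cup\{\infty\}$, the element $\bs{w}:=\phi(\bx)-\phi(\ze)$ is an honest vector of $\R^m$; the factorization gives $p\,\alpha^{-1}=\Delta^{-1}\widetilde\beta\,\bs{w}$, so that $\bs{q}:=(c\ze+d)(\bx-\ze)(c\bx+d)^{-1}=\beta\,p\,\alpha^{-1}=\Delta^{-1}N(\beta)\,\bs{w}$ with $N(\beta):=\beta\widetilde\beta\in\R$, whence $\bs{q}$ is a nonzero scalar multiple of the vector $\bs{w}$ and is therefore itself a vector. Consequently $\widetilde{\bs{q}}=\bs{q}$, which --- using $\widetilde{ab}=\widetilde b\,\widetilde a$ and $\widetilde p=p$ --- unwinds to the two identities $\bs{q}\,\alpha=\beta\,p$ and $\widetilde\alpha\,\bs{q}=p\,\widetilde\beta$. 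Second, for every $a$ in the Clifford group the map $T_a\colon\bs{v}\mapsto\widetilde a\,\bs{v}\,a/|a|^2$ is a linear, norm-preserving self-map of $\R^m$ --- up to sign a product of reflections --- hence $T_a\in O(m)$; and by the very definitions in the statement $T_{c\ze+d}(\bu)=\bom$ and $T_{c\bx+d}(\bv)=\bnu$.

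With these in hand the computation is short. Scale invariance of $\bs{v}\mapsto\bs{v}\bu\bs{v}/|\bs{v}|^2$ together with $\bs{q}=\Delta^{-1}N(\beta)\bs{w}$ give $\dfrac{\bs{w}\,\bu\,\bs{w}}{|\bs{w}|^2}=\dfrac{\bs{q}\,\bu\,\bs{q}}{|\bs{q}|^2}$; now apply $T_{c\bx+d}$ and substitute $\widetilde\alpha\,\bs{q}=p\,\widetilde\beta$, $\bs{q}\,\alpha=\beta p$ and $|\bs{q}|^2=|\beta|^2|p|^2|\alpha|^{-2}$ to obtain
\[
T_{c\bx+d}\!\left(\frac{\bs{w}\,\bu\,\bs{w}}{|\bs{w}|^2}\right)=\frac{(\widetilde\alpha\,\bs{q})\,\bu\,(\bs{q}\,\alpha)}{|\alpha|^2\,|\bs{q}|^2}=\frac{p\,\widetilde\beta\,\bu\,\beta\,p}{|\beta|^2|p|^2}=\frac{(\bx-\ze)\,\bom\,(\bx-\ze)}{|\bx-\ze|^2}.
\]
Thus the orthogonal map $T_{c\bx+d}$ sends the pair $\bigl(\tfrac{\bs{w}\bu\bs{w}}{|\bs{w}|^2},\,\bv\bigr)$ to $\bigl(\tfrac{(\bx-\ze)\bom(\bx-\ze)}{|\bx-\ze|^2},\,\bnu\bigr)$, and since $Z_k$ is $O(m)$-invariant --- because $\HK$ is stable under $O(m)$ and the surface measure on $\Smone$ is $O(m)$-invariant, so the reproducing kernel obeys $Z_k(T\bu,T\bv)=Z_k(\bu,\bv)$ for all $T\in O(m)$ --- identity $(2)$ follows.

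The step I expect to be the main obstacle is the factorization itself, specifically establishing that the $\ze$-dependent factor carries a reversion, i.e.\ is $\widetilde{(c\ze+d)}^{-1}$ and not $(c\ze+d)^{-1}$: this is exactly what forces $\bs{q}=(c\ze+d)(\bx-\ze)(c\bx+d)^{-1}$ to be vector-valued, and it cannot be guessed from the scalar difference quotient --- it genuinely invokes the Vahlen relations. Once that is in place, the rest is routine bookkeeping with reversions and with multiplicativity of the Clifford norm; that $T_a\in O(m)$ and that $Z_k$ is $O(m)$-invariant are standard.
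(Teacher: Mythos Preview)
Your proof is correct and takes a genuinely different route from the paper's. The paper argues by Iwasawa decomposition throughout: it checks both identities separately for translations, dilations, reflections, and inversions, giving details only for the inversion $\phi(\bx)=\bx^{-1}$ and invoking the reflection-invariance of $Z_k$ from the explicit Gegenbauer-type formula in \cite[Theorem 5.38]{Axler}. You instead appeal once to the Ahlfors--Vahlen factorization $\phi(\bx)-\phi(\ze)=\Delta\,\widetilde{\beta}^{-1}p\,\alpha^{-1}$ and extract both parts simultaneously; for part $(2)$ your key step is that conjugation $T_a\colon\bv\mapsto\widetilde a\,\bv\,a/|a|^2$ by a Clifford-group element lies in $O(m)$, combined with the full $O(m)$-invariance of $Z_k$. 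This is more conceptual and avoids the case analysis, at the price of leaning on the Vahlen relations---which, as you rightly flag, is exactly where the reversion on the $\ze$-factor comes from. Both arguments ultimately touch the Iwasawa decomposition somewhere (the paper uses it as its outer structure; you would use it to cover the non-invertible-$c$ case of the factorization), but yours packages the algebra more efficiently. Note that your invoked $O(m)$-invariance of $Z_k$ is nominally stronger than the paper's reflection-invariance, but they are equivalent since $O(m)$ is generated by reflections and $Z_k(\bu,\bv)$ depends only on $\langle\bu,\bv\rangle$, $|\bu|$, $|\bv|$.
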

\begin{proof}
The proof is similar to that of \cite[Theorem 8]{Dunkl} and \cite[Theorem 5.1]{JR}. We briefly recall it here. The idea is to prove the identities are true for a translation, a dilation, a rotation and an inversion, separately. Then, it is also true for an arbitrary M\"{o}bius transformation in accordance to the Iwasawa decomposition. It is easy to check that $(1)$ and $(2)$ are true under a dilation, a translation or a reflection. Here, we only show details for the inversion case, i.e., $\phi(\bx)=\bx^{-1}$, i.e., $a=d=0,b=c=1$. We see that
\be
\bx^{-1}-\ze^{-1}=-\ze^{-1}(\bx-\ze)\bx^{-1}=-\bx^{-1}(\bx-\ze)\ze^{-1}.
\ee
One can observe that statement $(1)$ is true. For statement $(2)$, one has
\be
Z_k\bigg(\displaystyle\frac{\bx(\bx-\ze)\ze\bu\widetilde{\ze}(\widetilde{\bx-\ze})\widetilde{\bx}}{|\bx|^2|\bx-\ze|^2|\ze|^2},\bv\bigg)=Z_k\bigg(\displaystyle\frac{(\bx-\ze)\ze\bu\widetilde{\ze}(\widetilde{\bx-\ze})}{|\bx-\ze|^2|\ze|^2},\frac{\bx\bv\bx}{|\bx|^2}\bigg),
\ee
which is derived from the fact that the reproducing kernel of the spherical harmonics is invariant under reflection. This can be observed from the explicit expression of the reproducing kernel given in \cite[Theorem 5.38]{Axler}. Therefore, statement $(2)$ is also true under inversion. Hence, both statements are correct in accordance to the Iwasawa decomposition.
\end{proof}
Let $\mathbb{B}^m$ be the open unit ball in $\R^m$ and $\varphi$ is the Cayley transform given as follows.
\begin{eqnarray}\label{Cay}
\varphi: \mathbb{B}^m\longrightarrow \R^{m}_+,\quad \bx\mapsto \bs{z}=-\frac{1}{2}(\bx+\bs{e}_m)(\bs{e}_m\bx+1)^{-1}.
\end{eqnarray}
In particular, if $\ze\in\Smone$ then $\varphi(\ze)\in\BUH$. Let $\bx,\bnu\in\Bm$ and $P_B[h](\bx,\bnu)$ is defined as
\begin{eqnarray*}
\frac{c_{m,k}}{2}\int_{\Smone}\int_{\Smone}\frac{1-|\bx|^2}{|\bx-\ze|^m}Z_k\bigg[\frac{(\bx-\ze)\bs{\omega}(\bx-\ze)}{|\bx-\ze|^2},\bs{\nu}\bigg]h(\ze,\bom)dS(\bom)dS(\ze),
\end{eqnarray*}
where $c_{m,k}$ is given in Lemma \ref{lemma1}. Here, we remind the reader that $\bs{\omega}$ and $\bs{\nu}$ are actually obtained by rotations of $\bu$ and $\bv$, respectively (see more details in \cite[Theorem 6.3]{Gilbert}). We claim that
\begin{theorem}[Dirichlet problem in $\mathbb{B}^m$ with continuous data]\label{Pintegral}
\par
Suppose $h\in C(\Smone\times\Bm, \HK)$. Define $h^*$ in $\mathbb{B}^m\times\mathbb{B}^m$ by
\[
h^*(\bx,\bnu)=\begin{cases}
P_B[h](\bx,\bnu),\ &\text{if}\ \bx\in\mathbb{B}^m, \quad \bnu \in \Bm\\
h(\bx,\bnu),\ &\text{if}\ \bx\in\Smone, \quad \bnu \in \Bm.
\end{cases}
\]
Then $h^*$ is continuous in $\overline{\mathbb{B}^m}$ with respect to $\bx$, $\Dtwo h^*=0$ in $\mathbb{B}^m\times\Bm$ and \begin{equation} \label{h*<h-1}
\|h^*\|_{L^{\infty}(\mathbb{B}^m\times\mathbb{B}^m)}\leq a'_{m,k}\|h\|_{L^{\infty}(\Smone\times\mathbb{B}^m)},
\end{equation} where $a'_{m,k}$ is a positive constant only depending on $m$ and $k$.
\end{theorem}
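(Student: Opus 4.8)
The plan is to transplant the upper-half-space results onto the ball by means of the Cayley transform $\varphi$ of \eqref{Cay} together with the conformal invariance of $\Dtwo$, invoking the direct Poisson-integral estimates only at the end to treat the boundary behaviour. Write $\varphi(\bx)=(a\bx+b)(c\bx+d)^{-1}$ with $c\bx+d=\bs{e}_m\bx+1$. As a M\"obius transformation, $\varphi$ is a composition of a translation, a dilation, a reflection and an inversion (the Iwasawa decomposition recalled in the Preliminaries), under each of which $\Dtwo$ is conformally invariant; the associated intertwining operator is
\begin{equation*}
J_\varphi[G](\bx,\bnu):=|c\bx+d|^{2-m}\,G\Big(\varphi(\bx),\;\tfrac{(c\bx+d)\,\bnu\,\widetilde{(c\bx+d)}}{|c\bx+d|^2}\Big),
\end{equation*}
and it carries $\Dtwo$-null solutions on $\R^m_+\times\Bm$ to $\Dtwo$-null solutions on $\Bm\times\Bm$. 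The first step is to prove the transplantation identity
\begin{equation*}
P_B[h](\bx,\bnu)=J_\varphi\big[P_H[f]\big](\bx,\bnu),\qquad f:=J_{\varphi^{-1}}[h]
\end{equation*}
(the choice of $f$ being dictated by $J_\varphi[f]=h$ on $\Smone\setminus\{\bs{e}_m\}$). This is a change of variables in \eqref{Poisson-1}: one substitutes $\vt=\varphi(\ze)$, $\ze\in\Smone$, whose surface Jacobian is $|c\ze+d|^{-2(m-1)}$, uses Lemma~\ref{lem1}(1) to rewrite $\tfrac{y}{|\bx-\vt|^m}$ — where, writing $\varphi(\bx)=(\bz',y)$, one has $|y|=\tfrac{1-|\bx|^2}{2|c\bx+d|^2}$ — in terms of $\tfrac{1-|\bx|^2}{|\bx-\ze|^m}$ and powers of $|c\bx+d|$, $|c\ze+d|$, uses Lemma~\ref{lem1}(2) to convert the argument of $Z_k$, and absorbs the residual powers of $|c\ze+d|$ into $f$. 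Here $f\in C(\BUH\times\Bm,\HK)\cap L^\infty(\BUH\times\Bm,\HK)$: it is $\HK$-valued by orthogonal invariance of harmonic polynomials, continuity is clear, and boundedness follows from the homogeneity of $h$ in the $\bom$-variable and the boundedness of $|c\ze+d|$ on $\Smone$.

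Granting the transplantation identity, $\Dtwo h^*=0$ in $\Bm\times\Bm$ follows from Theorem~\ref{theorem1} (which gives $\Dtwo P_H[f]=0$ in $\R^m_+\times\Bm$) together with the conformal invariance of $\Dtwo$; alternatively one differentiates under the integral sign in $P_B[h]$, the kernel being nonsingular for $\bx$ in compact subsets of $\Bm$ and $P_B(\cdot,\ze,\bom,\cdot)=J_\varphi[P_H(\cdot,\varphi(\ze),\cdot,\cdot)]$ a $\Dtwo$-null solution for each fixed $\ze$. For the estimate \eqref{h*<h-1} one may transplant \eqref{g<f-1}, or argue directly: from $|Z_k(\bu,\bv)|\le\dim\HK$ (\cite[Proposition~5.27]{Axler}), the classical identity $\int_{\Smone}\tfrac{1-|\bx|^2}{|\bx-\ze|^m}\,dS(\ze)=\omega_m$ for all $\bx\in\Bm$, and the homogeneity of $h$ in $\bom$ (cf.\ \eqref{equivalence}), one gets $|h^*(\bx,\bnu)|\le a'_{m,k}\|h\|_{L^\infty(\Smone\times\Bm)}$ with $a'_{m,k}$ depending only on $m$ and $k$.

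I would establish the continuity of $h^*$ up to $\overline{\Bm}$ directly on the ball rather than transporting Theorem~\ref{theorem1}, since the Cayley transform is singular at $\bs{e}_m\in\Smone$ and transferring continuity there would require a delicate Fatou-type statement at infinity for $P_H[f]$. In the interior $h^*$ is smooth, being a $\Dtwo$-null solution. For a boundary point $\ze_0\in\Smone$, the key ingredient is the ball counterpart of Lemma~\ref{lemma1},
\begin{equation*}
\frac{c_{m,k}}{2}\int_{\Smone}\frac{1-|\bx|^2}{|\bx-\ze|^m}\,f_k\Big(\frac{(\bx-\ze)\bu(\bx-\ze)}{|\bx-\ze|^2}\Big)\,dS(\ze)=f_k(\bu),\qquad f_k\in\HK,\ \ \bx\in\Bm,
\end{equation*}
which I would prove by adapting the slicing argument of Lemma~\ref{lemma1} to the ball (or by a conformal change of variables); combined with the reproducing property of $Z_k$ applied to the $\bom$-integral — using that $\bom\mapsto\tfrac{(\bx-\ze)\bom(\bx-\ze)}{|\bx-\ze|^2}$ is orthogonal — it yields $P_B[h(\ze_0,\cdot)](\bx,\bnu)=h(\ze_0,\bnu)$ for every $\bx\in\Bm$. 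Then, exactly as in the proof of Theorem~\ref{theorem1}, one writes
\begin{equation*}
h^*(\bx,\bnu)-h(\ze_0,\bnu)=\frac{c_{m,k}}{2}\int_{\Smone}\int_{\Smone}\frac{1-|\bx|^2}{|\bx-\ze|^m}\,Z_k\Big[\tfrac{(\bx-\ze)\bom(\bx-\ze)}{|\bx-\ze|^2},\bnu\Big]\big(h(\ze,\bom)-h(\ze_0,\bom)\big)\,dS(\bom)\,dS(\ze),
\end{equation*}
splits the $\ze$-integral into $\{|\ze-\ze_0|<\delta\}$ and $\{|\ze-\ze_0|\ge\delta\}$, bounds $|Z_k|\le\dim\HK$, and uses, for the near part, the uniform continuity of $h$ on the compact set $\Smone\times\Smone$ together with $\int_{\Smone}\tfrac{1-|\bx|^2}{|\bx-\ze|^m}\,dS(\ze)=\omega_m$, and, for the far part, the elementary bound $\int_{|\ze-\ze_0|\ge\delta}\tfrac{1-|\bx|^2}{|\bx-\ze|^m}\,dS(\ze)\to0$ as $\bx\to\ze_0$ (the ball analogue of Proposition~\ref{prop0}).

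The main obstacle is the conformal bookkeeping behind the transplantation identity: one must track the surface Jacobian $|c\ze+d|^{-2(m-1)}$ of $\varphi|_{\Smone}$, the factor $|c\bx+d|^{-2}|c\ze+d|^{-2}$ hidden inside $|\varphi(\bx)-\varphi(\ze)|$ via Lemma~\ref{lem1}(1), the factor $|c\bx+d|^{-2}$ in the last coordinate of $\varphi(\bx)$, and the two rotations of the $\bom$- and $\bnu$-variables from Lemma~\ref{lem1}(2), and check that they cancel exactly onto the kernel of $P_B$. Tied to this is the verification that $f=J_{\varphi^{-1}}[h]$ has enough regularity to feed into Theorem~\ref{theorem1}; the delicate point is the singular point $\bs{e}_m$ of the Cayley transform, which is why the boundary continuity is ultimately handled by the direct argument on the ball rather than by transplantation.
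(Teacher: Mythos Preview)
Your approach is essentially the paper's: both establish the transplantation identity $P_B[h]=|\bs{e}_m\bx+1|^{2-m}P_H[f](\varphi(\bx),\bv)$ via Lemma~\ref{lem1} and the surface Jacobian of the Cayley transform, deduce $\Dtwo h^*=0$ from conformal invariance, and then handle the $L^\infty$ bound (and continuity) by a direct argument on the ball using $|Z_k|\le\dim\HK$ together with the classical identity $\int_{\Smone}\tfrac{1-|\bx|^2}{|\bx-\ze|^m}\,dS(\ze)=\omega_m$. The paper simply writes that these last parts follow ``by a similar argument used in the upper-half space case''; your version is more explicit, in particular in isolating the ball analogue of Lemma~\ref{lemma1} and in flagging the Cayley singularity at $\bs{e}_m$ as the reason for treating boundary continuity directly on the ball rather than by transplantation. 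One small imprecision: your justification that $f=J_{\varphi^{-1}}[h]\in L^\infty$ appeals to ``the boundedness of $|c\ze+d|$ on $\Smone$'', but $|\bs{e}_m\ze+1|$ is not bounded below there; what actually makes $f$ bounded is that the conformal weight in $J_{\varphi^{-1}}$ carries the exponent $2-m<0$, so the factor decays at infinity.
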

\begin{proof} We will prove this theorem by showing that it can be derived from Theorem \ref{theorem1} after applying the Cayley transform $\varphi$. Recall that, in the upper-half space case, the function
\be
P_H[f](\bs{z},\bv)=\int_{\R^{m-1}}\int_{\Smone}P_H(\bs{z},\vt,\bu,\bv)f(\bt,\bu)dS(\bu)d\bt, \;\; {\bs{z} \in\R^m_+,\ \bv \in \Bm}, 
\ee
solves the Dirichlet problem of bosonic Laplacians, where $\bs{z}=(\bs{z}',z_m)\in\R^m_+$. By plugging $\varphi(\bx)=\bs{z}$ and $\varphi(\ze)=\vt$ as defined in (\ref{Cay}), where $\bx\in\mathbb{B}^m$ and $\ze\in\Smone$, into $P_H[f](\bs{z},\bv)$ and by taking into account that $z_m=\frac{1-|\bx|^2}{2|\bs{e}_m\bx+1|^{2}}$, we obtain
\begin{align*}
&P_H[f](\bs{z},\bv)=\int_{\R^{m-1}}\int_{\Smone}\frac{c_{m,k}z_m}{|\bz-\vt|^m}Z_k\bigg(\displaystyle\frac{(\bz-\vt)\bu(\bz-\vt)}{|\bz-\vt|^2},\bv\bigg)f(\bt,\bu)dS(\bu)d\bt\\
&=\int_{\Smone}\int_{\Smone}\frac{c_{m,k}(1-|\bx|^2)|\bs{e}_m\bx+1|^{-2}}{2|\bs{e}_m\bx+1|^{-m}|\bx-\ze|^m|\bs{e}_m\ze+1|^{-m}}Z_k\bigg(\frac{(\bx-\ze)\bs{\omega}(\bx-\ze)}{|\bx-\ze|^2},\bs{\nu}\bigg)\\
&\hspace{7.2cm} \times f(\varphi(\ze),\bu)J(\varphi,\ze)dS(\bu)dS(\ze),
\end{align*}
where $J(\varphi,\ze)$ is the Jacobian and 
\begin{eqnarray*}
\bs{\omega}=\displaystyle\frac{\widetilde{(\bs{e}_m\ze+1)}\bu(\bs{e}_m\ze+1)}{|\bs{e}_m\ze+1|^2}\quad \text{and}\quad \bs{\nu}=\displaystyle\frac{(\widetilde{\bs{e}_m\bx+1)}\bv(\bs{e}_m\bx+1)}{|\bs{e}_m\bx+1|^2}.
\end{eqnarray*}
It can be checked that $J(\varphi,\ze)=|\bs{e}_m\ze+1|^{-2m+2}$, which implies
\begin{align*}
P_H[f](\varphi&(\bx),\bv)=\frac{c_{m,k}}{2}|\bs{e}_m\bx+1|^{m-2}\int_{\Smone}\int_{\Smone}\frac{1-|\bx|^2}{|\bx-\ze|^m}\\
&\times Z_k\bigg(\frac{(\bx-\ze)\bs{\omega}(\bx-\ze)}{|\bx-\ze|^2},\bs{\nu}\bigg)|\bs{e}_m\ze+1|^{2-m}f(\varphi(\ze),\bu)dS(\bu)dS(\ze).
\end{align*}
In other words, one has
\begin{align}\label{PH1}
&|\bs{e}_m\bx+1|^{2-m}P_H[f](\varphi(\bx),\bv)=\frac{c_{m,k}}{2}\int_{\Smone}\int_{\Smone}\frac{1-|\bx|^2}{|\bx-\ze|^m}\nonumber\\
&\quad \times Z_k\bigg(\frac{(\bx-\ze)\bs{\omega}(\bx-\ze)}{|\bx-\ze|^2},\bs{\nu}\bigg)|\bs{e}_m\ze+1|^{2-m}f(\varphi(\ze),\bu)dS(\bu)dS(\ze).
\end{align} 
If we let $h(\ze,\bnu)=|\bs{e}_m\ze+1|^{2-m}f(\varphi(\ze),\bv)$, we obtain from the definition of $P_B$ that
\begin{equation}  \begin{aligned}\label{PH2}
&P_B[h](\bx,\bnu)\\
&=\frac{c_{m,k}}{2}\int_{\Smone}\int_{\Smone}\frac{1-|\bx|^2}{|\bx-\ze|^m}Z_k\bigg(\frac{(\bx-\ze)\bs{\omega}(\bx-\ze)}{|\bx-\ze|^2},\bs{\nu}\bigg)h(\ze,\bom)dS(\bom)dS(\ze).
\end{aligned} \end{equation}
From \eqref{PH1} and \eqref{PH2}, we have  $P_B[h](\bx,\bnu)=|\bs{e}_m\bx+1|^{2-m}P_H[f](\varphi(\bx),\bv)$. Since $\Dtwo P_H[f]=0$ and $\Dtwo$ is conformally invariant \cite[Theorem 1]{DR}, it follows that 
$$\Dtwo P_B[h](\bx,\bnu)=|\bs{e}_m\bx+1|^{2+m}\Dtwo |\bs{e}_m\bx+1|^{2-m}P_H[f](\varphi(\bx),\bv)=0.$$ 

With the boundedness of $Z_k(\bu,\bv)$ and noticing that $\frac{1-|\bx|^2}{|\bx-\ze|^m}$ is the Poisson kernel of the Laplacian in the unit ball, we derive \eqref{h*<h-1} by using a similar argument used in the upper-half space case. This completes the proof of Theorem \ref{Pintegral}.
\end{proof}
From Theorem \ref{Pintegral}, we can denote the Poisson kernel of bosonic Laplacian in the unit ball by
\begin{eqnarray}\label{pkernel}
P_B(\bx,\ze,\bom,\bnu)=\frac{c_{m,k}}{2}\frac{1-|\bx|^2}{|\bx-\ze|^m}Z_k\bigg(\frac{(\bx-\ze)\bs{\omega}(\bx-\ze)}{|\bx-\ze|^2},\bs{\nu}\bigg),
\end{eqnarray}
where $\bx,\bnu\in\Bm,\ze,\bom\in\Smone$. 
%Notice that although this kernel also consists of the Poisson kernel of Laplacian on the unit ball and the reproducing kernel $Z_k$, it also involves two different actions on $\bu$ and $\bv$, which are caused by the Cayley transform applied to $\bx$ and $\ze$.
\par
Now, we deal with the $L^p$ data with techniques from \cite{Axler}. Here, for $1\leq p<\infty$, $L^p(\Smone\times\mathbb{B}^m,\HK)$ consists of the Borel measurable functions $f$ on $\Smone\times\mathbb{B}^m$, for which
\be
\|f\|_{L^p(\Smone\times\mathbb{B}^m,\HK)}=\bigg(\int_{\Smone}\int_{\Smone}|f(\bx,\bu)|^pdS(\bu)dS(\bx)\bigg)^{1/p}<+\infty.
\ee
$L^{\infty}(\Smone\times\Bm,\HK)$ consists of the Borel measurable functions $f$ on $\Smone\times\Bm$ for which $\|f\|_{\infty}<+\infty$, where $\|\cdot \|_{\infty}$ stands for the essential supremum norm on $\Smone\times\Bm$. We will use $\|\cdot\|_{p}$ to represent $\|\cdot\|_{L^p(\Smone\times\mathbb{B}^m,\HK)}$ in the rest of this section when there is no confusion. For $f\in L^p(\Bm\times\Bm,\HK)$, we define $f_r(\et,\bu)=f(r\et,\bu)$ with $\et\in\Smone$, $\bu \in \Bm$, and $0\leq r<1$. Now, we claim that
\begin{theorem}\label{thmbdd}
Let $1\leq p\leq\infty$, $f\in C(\Smone\times\mathbb{B}^m,\HK)$ and $g=P_B[f]$. Then, for any $0\leq r<1$, we have $\|g_r\|_p\leq b_{m,k}\|f\|_p$, where $b_{m,k}$ is a constant depending on $m$ and $k$.
\end{theorem}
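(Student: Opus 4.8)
The plan is to strip the reproducing-kernel factor out of $P_B$, thereby reducing the estimate to the classical $L^p$-boundedness of the Poisson integral for the Laplacian on $\mathbb{B}^m$, and then to follow the classical pattern (see \cite{Axler}). Writing $\bx=r\et$ with $\et\in\Smone$ and $0\le r<1$, by \eqref{pkernel} one has
\begin{equation*}
g_r(\et,\bu)=\int_{\Smone}\int_{\Smone}P_B(r\et,\ze,\bom,\bu)\,f(\ze,\bom)\,dS(\bom)\,dS(\ze).
\end{equation*}
In $P_B(r\et,\ze,\bom,\bu)$ the first entry of $Z_k$ is a vector whose modulus equals that of the attached rotation of $\bom$, hence equals $1$ (the map $\bs{v}\mapsto\bs{w}\bs{v}\bs{w}/|\bs{w}|^2$ is modulus-preserving on vectors; cf.\ the proof of Lemma \ref{lemma1} and Lemma \ref{lem1}), while the second entry has modulus at most $1$; so $|Z_k|\le\dim\HK$ by \cite[Proposition 5.27]{Axler} together with the homogeneity of $Z_k$ in its second slot. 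Hence, setting $Q_r(\et,\ze):=(1-r^2)\,|r\et-\ze|^{-m}$,
\begin{equation*}
|g_r(\et,\bu)|\le\frac{c_{m,k}}{2}\,\dim\HK\int_{\Smone}\int_{\Smone}Q_r(\et,\ze)\,|f(\ze,\bom)|\,dS(\bom)\,dS(\ze),
\end{equation*}
and every trace of the Clifford structure has disappeared: only the scalar Poisson kernel $Q_r$ of $\Delta$ on $\mathbb{B}^m$ survives.

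Next I would record the two properties of $Q_r$ that drive the estimate, both valid for all $\et,\ze\in\Smone$ and all $r\in[0,1)$: (i) the integral $\int_{\Smone}(1-|\bx|^2)|\bx-\ze|^{-m}\,dS(\ze)$, being the Poisson integral over $\mathbb{B}^m$ of the boundary datum $1$, is a constant $\kappa_m$ depending only on $m$, so in particular $\int_{\Smone}Q_r(\et,\ze)\,dS(\ze)=\kappa_m$; and (ii) from $|r\et-\ze|^2=r^2-2r\langle\et,\ze\rangle+1=|r\ze-\et|^2$ one gets $\int_{\Smone}Q_r(\et,\ze)\,dS(\et)=\int_{\Smone}(1-|r\ze|^2)\,|r\ze-\et|^{-m}\,dS(\et)=\kappa_m$ as well. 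I do not expect a serious obstacle here — this boundedness statement is considerably softer than the existence results that precede it — and the only point deserving a moment's care is identity (ii); once it is available the argument is entirely standard.

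Finally I would fix $1\le p<\infty$. For fixed $\et$ and $r$, property (i) makes $(\kappa_m\omega_m)^{-1}Q_r(\et,\ze)\,dS(\bom)\,dS(\ze)$ a probability measure on $\Smone\times\Smone$, so Jensen's inequality applied to the convex function $t\mapsto t^p$ yields
\begin{equation*}
|g_r(\et,\bu)|^p\le\Bigl(\tfrac{c_{m,k}}{2}\dim\HK\Bigr)^{p}(\kappa_m\omega_m)^{p-1}\int_{\Smone}\int_{\Smone}Q_r(\et,\ze)\,|f(\ze,\bom)|^p\,dS(\bom)\,dS(\ze).
\end{equation*}
Since the right-hand side does not depend on $\bu$, I would integrate in $\bu\in\Smone$, then in $\et\in\Smone$, apply Tonelli's theorem to bring the $\et$-integral inside, and use property (ii); this gives $\|g_r\|_p^p\le b_{m,k}^p\,\|f\|_p^p$ with $b_{m,k}:=\tfrac{c_{m,k}}{2}\dim\HK\,\omega_m\kappa_m$, a constant depending only on $m$ and $k$ — and, pleasantly, independent of $p$. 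For $p=\infty$ the second displayed pointwise bound together with property (i) gives $\|g_r\|_\infty\le b_{m,k}\|f\|_\infty$ directly (alternatively this is a consequence of \eqref{h*<h-1}), so the same $b_{m,k}$ serves for every $1\le p\le\infty$, and the proof is complete.
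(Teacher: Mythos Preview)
Your proof is correct and follows essentially the same approach as the paper: bound $|Z_k|$ by $\dim\HK$, reduce to the scalar Poisson kernel $Q_r$, apply Jensen's inequality via the probability-measure structure coming from $\int_{\Smone}Q_r(\et,\ze)\,dS(\ze)=\text{const}$, and then swap the order of integration using the symmetry $|r\et-\ze|=|r\ze-\et|$ to close the estimate. Your write-up is a bit more explicit about why Jensen applies and tracks the constant more carefully, but the underlying argument is the same.
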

\begin{proof}
We firstly prove the result for $1\leq p<\infty$. Since $Z_k(\bu,\bv)$ is bounded, we have
\be
&&\|g_r\|_{p}^p=\int_{\Smone}\int_{\Smone}|g_r(\et,\bnu)|^pdS(\bnu)dS(\et)
\\
%&=&\int_{\Smone}\int_{\Smone}
%\bigg\rvert\int_{\Smone}\int_{\Smone}\frac{1-r^2}{|r\et-\ze|^m}Z_k\bigg(\frac{(r\et-\ze)\bs{\omega}(r\et-\ze)}{|r\et-\ze|^2},\bs{\nu}\bigg)f(\ze,\bu)dS(\bu)dS(\ze)\bigg\rvert^pdS(\bv)dS(\et)\\
&\leq&b_{m,k}\int_{\Smone}\bigg\rvert\int_{\Smone}\int_{\Smone}\frac{1-r^2}{|r\et-\ze|^m}|f(\ze,\bom)|dS(\bom)dS(\ze)\bigg\rvert^pdS(\et).
\ee
Recall that $\int_{\Smone}\frac{1-r^2}{|r\et-\ze|^m}dS(\ze)=\omega_m,$
then according to the above estimate and the Jensen's inequality, we have
\be 
\|g_r\|_{p}^p 
%&\leq& b_{m,k}\int_{\Smone}\int_{\Smone}\int_{\Smone}\frac{1-r^2}{|r\et-\ze|^m}|f(\ze,\bu)|^pdS(\bu)dS(\ze)dS(\et)\\
&\leq&b_{m,k}\int_{\Smone}\int_{\Smone}\int_{\Smone}\frac{1-r^2}{|r\ze-\et|^m}dS(\et)|f(\ze,\bom)|^pdS(\bom)dS(\ze)\\
&=&b_{m,k}\omega_m\int_{\Smone}\int_{\Smone}|f(\ze,\bom)|^pdS(\bom)dS(\ze)=b_{m,k}\|f\|_p.
\ee
We used the fact that $|r\ze-\et|=|r\et-\ze|,\ \ze,\et\in\Smone$ and Fubini's Theorem in the above calculation. 

For $p=\infty$, it is easy to observe that
\begin{align*}
|g_r(\et,\bnu)|&=\bigg\rvert\int_{\Smone}\int_{\Smone}P_B(r\et,\ze,\bom,\bnu)f(\ze,\bom)dS(\bom)dS(\ze)\bigg\rvert\\
&\leq b'_{m,k}\|f\|_{\infty}\int_{\Smone}\frac{1-r^2}{|r\et-\ze|^m}dS(\ze)=b_{m,k}'\|f\|_{\infty},
\end{align*}
which completes the proof of the theorem.
\end{proof}
An immediate consequence of the theorem above is the following.
\begin{proposition} Let $1 \leq p \leq \infty$. Assume that $\Dtwo f=0$ in $\Bm\times\Bm$ and $0\leq r\leq s<1$. Then $\|f_r\|_p\leq b_{m,k}\|f_s\|_p$.
\end{proposition}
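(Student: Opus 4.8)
The plan is to reduce the estimate to Theorem~\ref{thmbdd} by a dilation argument. Fix $0\le r\le s<1$ (the case $r=s$ being trivial), set $\rho:=r/s\in[0,1)$, and define $g(\bx,\bu):=f(s\bx,\bu)$; since $s<1$ this is well defined for $|\bx|<1/s$, in particular on $\overline{\Bm}\times\Bm$. The strategy is: first show $g$ is a continuous $\Dtwo$-null solution up to $\overline{\Bm}$; then identify $g$ with the Poisson integral $P_B[h]$ of its boundary trace $h$; and finally invoke Theorem~\ref{thmbdd} at radius $\rho$ to compare $\|g_\rho\|_p=\|f_r\|_p$ with $\|h\|_p=\|f_s\|_p$.

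The one computational point — which I expect to be the crux, although it is short — is that $g$ is again annihilated by $\Dtwo$ on $\{|\bx|<1/s\}$. Writing $\bz=s\bx$, one has $D_{\bx}g=s\,(D_{\bz}f)(s\bx,\bu)$ and $\Delta_{\bx}g=s^{2}(\Delta_{\bz}f)(s\bx,\bu)$, whereas the $\bu$-derivatives are unaffected; since each of the three terms in the expression \eqref{Dtwo} for $\Dtwo$ — namely $\Delta_{\bx}$, the product $\langle\bu,D_{\bx}\rangle\langle D_{\bu},D_{\bx}\rangle$, and $|\bu|^{2}\langle D_{\bu},D_{\bx}\rangle^{2}$ — contains exactly two $\bx$-derivatives, it follows that $\Dtwo g(\bx,\bu)=s^{2}(\Dtwo f)(s\bx,\bu)=0$. (This is nothing but the conformal covariance of $\Dtwo$ under the dilation $\bx\mapsto s\bx$, whose induced action on the second variable is trivial, but checking it directly from \eqref{Dtwo} is cleanest.) Hence $g$ has the same regularity as $f$, so it is continuous on $\overline{\Bm}\times\Bm$ and $\Dtwo$-null on $\Bm\times\Bm$.

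I would then set $h(\et,\bnu):=g(\et,\bnu)=f(s\et,\bnu)$ for $\et\in\Smone$, $\bnu\in\Bm$, so that $h\in C(\Smone\times\Bm,\HK)$. Both $g$ and $P_B[h]$ are continuous on $\overline{\Bm}\times\Bm$, are $\Dtwo$-null on $\Bm\times\Bm$ (the latter by Theorem~\ref{Pintegral}), and restrict to $h$ on $\Smone\times\Bm$; by the uniqueness of solutions to the Dirichlet problem in $\Bm$ with continuous data, $g=P_B[h]$ on $\Bm\times\Bm$ — equivalently, $f(s\bx,\bnu)=P_B[f_s](\bx,\bnu)$. Applying Theorem~\ref{thmbdd} to $h$ at radius $\rho\in[0,1)$ gives $\|g_\rho\|_p=\|(P_B[h])_\rho\|_p\le b_{m,k}\|h\|_p$, and unwinding, $g_\rho(\et,\bnu)=f(s\rho\et,\bnu)=f(r\et,\bnu)=f_r(\et,\bnu)$ and $h(\et,\bnu)=f_s(\et,\bnu)$, whence $\|f_r\|_p\le b_{m,k}\|f_s\|_p$. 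Apart from the dilation-covariance of $\Dtwo$, the only ingredient that must be in hand is the uniqueness for the continuous Dirichlet problem on $\Bm$; the rest is bookkeeping.
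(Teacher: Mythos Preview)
Your argument is correct and is exactly the paper's approach: the paper's one-line proof records the identity $f_r=P_B[f_s]_{r/s}$ and applies Theorem~\ref{thmbdd}, and what you have written is precisely the justification of that identity via dilation covariance of $\Dtwo$ together with uniqueness for the continuous Dirichlet problem in $\Bm$. Nothing is missing and nothing is different.
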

\begin{proof} From Theorem \ref{thmbdd}, we have
$\|f_r\|_p=\|P_B[f_s]_{\frac{r}{s}}\|_p\leq b_{m,k}\|f_s\|_p$.
\end{proof}
Theorem \ref{thmbdd} also implies a $L^p$-convergence of $f_r$ as below.
\begin{theorem}
Assume $1\leq p<\infty$. Let $f\in L^p(\Smone\times\Bm,\HK)$ and $g=P_B[f]$.  Then
\begin{equation} \label{grtof} \lim_{r \to 1}\|g_r-f\|_p = 0.
\end{equation}
\end{theorem}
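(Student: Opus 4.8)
The plan is to run the classical three‑term (``$\varepsilon/3$'') argument: first establish \eqref{grtof} for continuous $f$, and then pass to general $f\in L^p$ by density, using the uniform bound on $P_B$ supplied by Theorem~\ref{thmbdd} to absorb the approximation error. Concretely, fix $\varepsilon>0$ and choose $\phi\in C(\Smone\times\Bm,\HK)$ with $\|f-\phi\|_p<\varepsilon$; such $\phi$ exists because $C(\Smone\times\Bm,\HK)$ is dense in $L^p(\Smone\times\Bm,\HK)$ for $p<\infty$ (the $L^p$-norm only sees the values of an $\HK$-valued function on $\Smone\times\Smone$, and since $\HK$ is finite dimensional this reduces to density of $C(\Smone;\HK)$ in $L^p(\Smone;\HK)$). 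Using linearity of $P_B[\cdot]$ and inserting $(P_B[\phi])_r$ and $\phi$,
\[
\|g_r-f\|_p\le \|(P_B[f-\phi])_r\|_p+\|(P_B[\phi])_r-\phi\|_p+\|\phi-f\|_p\le (b_{m,k}+1)\varepsilon+\|(P_B[\phi])_r-\phi\|_p,
\]
where $\|(P_B[f-\phi])_r\|_p\le b_{m,k}\|f-\phi\|_p$ is Theorem~\ref{thmbdd}. Letting $r\to1$ and then $\varepsilon\to0$ reduces the claim to the case of continuous $f$.

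\textbf{The continuous case.} Suppose $f\in C(\Smone\times\Bm,\HK)$. By Theorem~\ref{Pintegral}, the extension $h^\ast$ of $P_B[f]$ by $f$ on $\Smone\times\Bm$ is continuous on $\overline{\Bm}$ in the $\bx$-variable, so for each fixed $\et\in\Smone$, $\bnu\in\Bm$ we have $g_r(\et,\bnu)=P_B[f](r\et,\bnu)\to f(\et,\bnu)$ as $r\to1$. On the other hand, the $p=\infty$ estimate inside the proof of Theorem~\ref{thmbdd} gives $|g_r(\et,\bnu)|\le b'_{m,k}\|f\|_\infty$ for all $0\le r<1$, where $\|f\|_\infty<\infty$ because $f$ extends continuously to the compact set $\Smone\times\overline{\Bm}$. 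Hence $|g_r-f|^p$ is dominated by the constant $(b'_{m,k}\|f\|_\infty+\|f\|_\infty)^p$, which is integrable over $\Smone\times\Smone$ since this set has finite surface measure $\omega_m^2$. The dominated convergence theorem then yields
\[
\|g_r-f\|_p^p=\int_{\Smone}\int_{\Smone}|g_r(\et,\bnu)-f(\et,\bnu)|^p\,dS(\bnu)\,dS(\et)\longrightarrow 0,
\]
which is \eqref{grtof} for continuous $f$, and the proof is complete.

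\textbf{On the main point.} This is the unit-ball analog of the standard convergence theorem for Poisson integrals of $L^p$ functions, and the only inputs beyond soft measure theory are the two facts already proved about $P_B$: the boundary continuity of Theorem~\ref{Pintegral} and the uniform $L^p$/$L^\infty$ bounds of Theorem~\ref{thmbdd}. The step that needs a little care is that Theorem~\ref{thmbdd}, though stated for continuous $f$, must be used for arbitrary $f\in L^p(\Smone\times\Bm,\HK)$, since it is exactly the uniform bound $\|(P_B[\psi])_r\|_p\le b_{m,k}\|\psi\|_p$ that powers the density argument; happily its proof—Jensen's inequality together with the identity $\int_{\Smone}(1-r^2)|r\et-\ze|^{-m}\,dS(\ze)=\omega_m$ and Fubini—uses no continuity and applies verbatim to $L^p$ data. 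Otherwise there is no hard estimate involved: the only genuine bookkeeping is the presence of the second (spin) variable $\bnu$ and the $\HK$-valued nature of the functions.
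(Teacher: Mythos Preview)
Your proof is correct and follows essentially the same three-term density argument as the paper: approximate $f$ by a continuous $\phi$, use Theorem~\ref{thmbdd} to control $\|(P_B[f-\phi])_r\|_p$, and invoke Theorem~\ref{Pintegral} for the continuous case. You are in fact more careful than the paper on two points the paper leaves implicit: you spell out (via dominated convergence) why Theorem~\ref{Pintegral} yields $\|(P_B[\phi])_r-\phi\|_p\to 0$ for continuous $\phi$, and you flag that Theorem~\ref{thmbdd}, though stated for continuous data, must be---and can be---applied to the $L^p$ function $f-\phi$.
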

\begin{proof} Fix $\epsilon>0$ and choose $h\in C(\Smone\times\Bm,\HK)$ such that $\|f-h\|_p\leq\epsilon$. Let $g'=P_B[h]$, then we have
\be
\|g_r-f\|_p\leq\|g_r-g'_r\|_p+\|g'_r-h\|_p+\|h-f\|_p.
\ee
Notice that $g_r-g'_r=P_B[f-h]_r$, which yields $\|g_r-g'_r\|_p\leq b_{m,k}\|f-h\|_p<b_{m,k}\epsilon$ by Theorem \ref{thmbdd}. Therefore, one has
$
\|g_r-f\|_p\leq \|g'_r-h\|_p+(b_{m,k}+1)\epsilon.
$
Since $h\in C(\Smone\times\Bm,\HK)$, Theorem \ref{Pintegral} tells us that $\|g'_r-h\|_p\rightarrow 0$ when $r\rightarrow 1$. Hence,
$
\lim_{r\rightarrow 1}\|g_r-f\|_p\leq (b_{m,k}+1)\epsilon.
$
Since $\epsilon$ is arbitrary and $b_{m,k}$ depends only on $m$ and $k$, we conclude \eqref{grtof}.
\end{proof}
With the $L^p$-convergence of $f_r$, we obtain the result of the Dirichlet problem of bosonic Laplacians in the unit ball with $L^p$ data as follows.
\begin{theorem}[Dirichlet problem in $\mathbb{B}^m$ with $L^p$ data]
Assume $1\leq p<\infty$ and $h\in L^p(\Smone\times\Bm,\HK)$. Define $h^*$ in $\mathbb{B}^m\times\mathbb{B}^m$ by
\[
h^*(\bx,\bnu)=\begin{cases}
P_B[h](\bx,\bnu),\ &\text{if}\ \bx\in\mathbb{B}^m, \quad \bnu \in \Bm, \\
h(\bx,\bnu),\ &\text{if}\ \bx\in\Smone, \quad \bnu \in \Bm.
\end{cases}
\]
Then the following hold 
%\begin{enumerate}
%\item 
\\
(1). $\Dtwo h^*=0$ in $\Bm\times\Bm$,
\\
(2). $\|h^*_r\|_p\leq b'_{m,k}\|h\|_p$, where $b'_{m,k}>0$ is a constant depending on $m$ and $k$,
\\
(3). $\lim_{r \to 1}\|h^*_r-h\|_p = 0$.
%\end{enumerate}
\end{theorem}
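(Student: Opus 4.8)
The plan is to deduce all three assertions from the results already established for $P_B$ together with a routine density argument, so that no genuinely new estimate is needed. Throughout, note that for $h\in L^p(\Smone\times\Bm,\HK)$ the integral defining $P_B[h](\bx,\bnu)$ converges for every $\bx\in\Bm$, since $L^p\hookrightarrow L^1$ on the finite-measure space $\Smone\times\Bm$ and the kernel $P_B(\bx,\ze,\bom,\bnu)$ of \eqref{pkernel} is bounded in $(\ze,\bom)$ once $\bx$ is interior.

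For part (1) I would repeat, essentially verbatim, the differentiation-under-the-integral-sign argument used in the proof of Theorem~\ref{theorem1} and its $L^p$ analogue. Fixing a compact set $V\subset\Bm$, the kernel $P_B(\bx,\ze,\bom,\bnu)$ is nonsingular for $(\bx,\ze)\in V\times\Smone$ because $|\bx-\ze|$ stays bounded away from $0$; hence $P_B$ and every partial derivative of it in the components of $\bx$ or of $\bnu$ that enters $\Dtwo$ is bounded uniformly over $(\bx,\bnu)\in V\times\Bm$ and $(\ze,\bom)\in\Smone\times\Smone$. Since $\Smone\times\Bm$ has finite measure, $L^p(\Smone\times\Bm,\HK)\hookrightarrow L^1(\Smone\times\Bm,\HK)$, so the supremum over $(\bx,\bnu)\in V\times\Bm$ of any such derivative times $|h(\ze,\bom)|$ is dominated by a constant multiple of $|h(\ze,\bom)|\in L^1$. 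This licenses differentiation under the integral sign and gives $\Dtwo h^*(\bx,\bnu)=\int_{\Smone}\int_{\Smone}\big(\Dtwo P_B\big)(\bx,\ze,\bom,\bnu)\,h(\ze,\bom)\,dS(\bom)\,dS(\ze)$, after which I would invoke $\Dtwo P_B(\cdot,\ze,\bom,\bnu)=0$ for each fixed $\ze\in\Smone$: by the Cayley-transform identity in the proof of Theorem~\ref{Pintegral}, $P_B(\bx,\ze,\bom,\bnu)$ is, up to the conformal weight $|\bs{e}_m\bx+1|^{2-m}$, the pull-back under $\varphi$ of the null solution $P_H(\cdot,\varphi(\ze),\bu,\bv)$, and $\Dtwo$ is conformally invariant \cite[Theorem 1]{DR}. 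Hence $\Dtwo h^*=0$ in $\Bm\times\Bm$.

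For part (2) I would observe that the proof of Theorem~\ref{thmbdd} uses only the boundedness of $Z_k$, the classical identity $\int_{\Smone}\frac{1-r^2}{|r\et-\ze|^m}\,dS(\ze)=\omega_m$, Jensen's inequality and Fubini's theorem — none of which requires continuity of the data. Running exactly that chain of inequalities for $h\in L^p(\Smone\times\Bm,\HK)$ yields $\|h^*_r\|_p=\|P_B[h]_r\|_p\leq b'_{m,k}\|h\|_p$ for all $0\leq r<1$; equivalently, one may approximate $h$ in $L^p$ by $h_n\in C(\Smone\times\Bm,\HK)$, apply Theorem~\ref{thmbdd} to each $h_n$, and let $n\to\infty$ using the linearity of $P_B$ and $\|P_B[h_n-h]_r\|_p\leq b_{m,k}\|h_n-h\|_p$. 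Part (3) is then immediate, being exactly the statement of the theorem immediately preceding this one applied with $f=h$ and $g=P_B[h]=h^*$, so $\|h^*_r-h\|_p\to0$ as $r\to1$.

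I do not expect a genuine obstacle here, since the analytic substance is already carried by Theorems~\ref{Pintegral} and \ref{thmbdd} and by the $L^p$-approximation of boundary data by continuous functions. The one step that deserves a moment's care is the justification of differentiation under the integral sign in part (1) for data that are merely $L^p$ rather than bounded; as indicated above, this is handled by the embedding $L^p\hookrightarrow L^1$ on the finite-measure space $\Smone\times\Bm$ together with the uniform bounds on $P_B$ and its derivatives over compact subsets of $\Bm$.
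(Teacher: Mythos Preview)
Your proposal is correct and follows the same approach that the paper intends: the theorem is stated there without proof, as an immediate consequence of Theorem~\ref{thmbdd} (whose proof, as you note, nowhere uses continuity of the boundary data), the $L^p$-convergence theorem immediately preceding it (which is exactly your part~(3)), and the standard differentiation-under-the-integral-sign argument for part~(1). Your only addition is to make explicit the $L^p\hookrightarrow L^1$ embedding that justifies differentiating under the integral for merely $L^p$ data, which is a point the paper leaves implicit.
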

%%%%%%%%%%%%%%%%
\section{Uniqueness for solutions of the Dirichlet problems}
In this section, we will apply some results of elliptic differential operators on vector bundles to prove the uniqueness for solutions to the Dirichlet problems for bosonic Laplacians in the unit ball.
\subsection{Estimates regarding linear elliptic differential operators}
\hfill\\ Let  $\zeta=(E,\pi,X)$ be a real vector bundle of rank $s$. In this section, we only consider $E=X\times\R^s$, where $\pi:\ E\longrightarrow X$ is the natural projection, and the triple $\theta_s=(E,\pi,X)$ is called the trivial bundle of rank $s$ over $X$. A bundle isomorphic to $\theta_s$ is also said to be trivial. 
\par
Now, we consider differential operators from $\theta_r$ to $\theta_s$ on an open set $\Omega\subset\R^m$. We suppose that the operator is of form
\begin{eqnarray}\label{operatorP}
P(f)(\bx)=\sum_{|\bs{\alpha}|\leq n}a_{\bs{\alpha}}(\bx)D^{\bs{\alpha}}f(\bx), f\in C^{\infty}(\Omega,r)
\end{eqnarray}
where $\bs{\alpha}=(\alpha_1,\cdots,\alpha_m)$ with $\alpha_j\geq 0,\ j=1,\cdots,m$ is a multi-index, $|\bs{\alpha}|=\sum_{j=1}^m\alpha_j$, $D^{\bs{\alpha}}:=\partial_{x_1}^{\alpha_1}\cdots\partial_{x_m}^{\alpha_m}$ and $C^{\infty}(\Omega,r)$ stands for functions in $C^{\infty}(\Omega)$ taking values in $\R^r$. If we replace $\partial_{x_j}$ with $\eta_j$, then we call 
$
p(\bx,\bs{\eta})=\sum_{|\bs{\alpha}|=n}a_{\bs{\alpha}}(\bx)\bs{\eta}^{\bs{\alpha}},\quad \bx\in\Omega,\bs{\eta}\in\R^m
$
the \emph{principal symbol} of the differential operator $P$ given in \eqref{operatorP}. A linear differential operator of order $n$ from $\theta_r$ to $\theta_s$ in $\Omega$ is called \emph{elliptic} if and only if for any $\bs{\eta}\neq0,\ \bs{\eta}\in\R^m$ and $\bx\in\Omega$, the map $p(\bx,\bs{\eta}):\R^r\longrightarrow \R^s$ is injective.
\par
Now, we introduce an estimate from \cite{Nar}, which is crucial in the proof of uniqueness for the Dirichlet problems given in the next subsection.
\begin{proposition}[3.6.9, \cite{Nar}]\label{estimates}
Let $\Omega$ be a bounded open set in $\R^m$ and $P$ a linear elliptic differential operator with constant coefficients from $\theta_r$ to $\theta_s$ of order $n$ given by 
$
Pf(\bx)=\sum_{|\bs{\alpha}|\leq n}a_{\bs{\alpha}}D^{\bs{\alpha}}f(\bx)$ for $f\in C^{\infty}(\Omega,r).
$
Let $\lambda\geq 0$  be an integer. Then there exists a constant $C>0$ such that 
\be
\|f\|_{H^{n+\lambda}}\leq C\|Pf\|_{H^\lambda},\quad\text{for\ all\ }f\in C^{\infty}_0(\Omega,r),
\ee
where $\|\cdot\|_{H^\lambda}$ is the norm of Sobolev space $H^\lambda(\Omega)$.
\end{proposition}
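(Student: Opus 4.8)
The plan is to derive the estimate on the Fourier transform side, first with a harmless lower-order remainder, and then to absorb that remainder using the boundedness of $\Omega$. Since $f\in C^\infty_0(\Omega,r)$, its extension by zero lies in $C^\infty_0(\R^m,r)$ and $\|f\|_{H^\lambda(\Omega)}=\|f\|_{H^\lambda(\R^m)}$ for integer $\lambda\ge 0$ (and likewise for $Pf\in C^\infty_0(\Omega,s)$), so we may work on $\R^m$ and use Plancherel with $\|u\|_{H^\sigma}^2=\int_{\R^m}(1+|\bs{\eta}|^2)^{\sigma}|\widehat u(\bs{\eta})|^2\,d\bs{\eta}$. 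Writing $p(\bs{\eta})=\sum_{|\bs{\alpha}|\le n}a_{\bs{\alpha}}(i\bs{\eta})^{\bs{\alpha}}$, one has $\widehat{Pf}(\bs{\eta})=p(\bs{\eta})\widehat f(\bs{\eta})$; ellipticity means the leading homogeneous part $p_n(\bs{\eta})=i^n\sum_{|\bs{\alpha}|=n}a_{\bs{\alpha}}\bs{\eta}^{\bs{\alpha}}$ is injective on $\R^r$ for $\bs{\eta}\ne 0$, so its smallest singular value is continuous and strictly positive on the unit sphere, and homogeneity of degree $n$ gives $c>0$ with $|p_n(\bs{\eta})v|\ge c|\bs{\eta}|^n|v|$ for all $\bs{\eta},v$. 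Since $p-p_n$ has order $\le n-1$, there are $R,c_1>0$ with $|p(\bs{\eta})v|\ge c_1(1+|\bs{\eta}|^2)^{n/2}|v|$ for $|\bs{\eta}|\ge R$. Splitting $\|f\|_{H^{n+\lambda}}^2$ at $|\bs{\eta}|=R$, the region $|\bs{\eta}|\ge R$ contributes at most $c_1^{-2}\int(1+|\bs{\eta}|^2)^{\lambda}|\widehat{Pf}|^2=c_1^{-2}\|Pf\|_{H^\lambda}^2$, while the region $|\bs{\eta}|<R$ contributes at most $(1+R^2)^{n+\lambda}\|f\|_{L^2}^2$; hence
\[
\|f\|_{H^{n+\lambda}}\le C_2\big(\|Pf\|_{H^\lambda}+\|f\|_{L^2}\big),\qquad f\in C^\infty_0(\Omega,r),
\]
with $C_2$ depending only on $P$, $m$, $\lambda$. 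This is a standard G\aa{}rding-type a priori estimate and does not use that $\Omega$ is bounded.

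Next I would absorb $\|f\|_{L^2}$ into the left-hand side by a compactness argument, which is where boundedness of $\Omega$ enters. It suffices to show $\|f\|_{L^2}\le C_3\|Pf\|_{H^\lambda}$ for $f\in C^\infty_0(\Omega,r)$. If this failed there would be $f_j\in C^\infty_0(\Omega,r)$ with $\|f_j\|_{L^2}=1$ and $\|Pf_j\|_{H^\lambda}\to 0$; by the displayed estimate $\{f_j\}$ is bounded in $H^{n+\lambda}(\R^m)$ with supports in the fixed compact set $\overline{\Omega}$, so Rellich--Kondrachov gives a subsequence converging in $L^2(\R^m)$ to some $f$ with $\operatorname{supp}f\subseteq\overline{\Omega}$ and $\|f\|_{L^2}=1$. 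Since $P$ is continuous on $\mathcal{D}'(\R^m)$, $Pf=\lim_j Pf_j=0$ in $\mathcal{D}'(\R^m)$; being a constant-coefficient elliptic operator, $P$ is analytic-hypoelliptic, so $f$ is real-analytic on $\R^m$, and a real-analytic function that vanishes on the non-empty open set $\R^m\setminus\overline{\Omega}$ vanishes identically --- contradicting $\|f\|_{L^2}=1$. Combining the two estimates yields $\|f\|_{H^{n+\lambda}}\le C_2(1+C_3)\|Pf\|_{H^\lambda}$, as claimed.

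The computations of the first paragraph are routine. The crux is the removal of the $\|f\|_{L^2}$ remainder, and the point that needs care there is to check that the $L^2$-limit $f$ solves $Pf=0$ on all of $\R^m$ (not just on $\Omega$) and is compactly supported, so that analytic hypoellipticity together with the identity theorem forces $f\equiv 0$; this is precisely what the uniform support control and the uniform $H^{n+\lambda}$ bound on $\{f_j\}$ provide, and it is the only step that genuinely uses the hypothesis that $\Omega$ is bounded.
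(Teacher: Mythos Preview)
The paper does not supply its own proof of this proposition: it is quoted verbatim as a result from Narasimhan's book (item 3.6.9 in \cite{Nar}) and is used as a black box to establish uniqueness for the Dirichlet problem. So there is no in-paper argument to compare against; your task was, in effect, to reconstruct a proof of a cited external result.

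Your argument is correct and is essentially the standard one. The Fourier-side G\aa{}rding estimate in the first paragraph is derived cleanly: injectivity of $p_n(\bs{\eta})$ on the sphere plus homogeneity gives the lower bound $|p_n(\bs{\eta})v|\ge c|\bs{\eta}|^n|v|$, the lower-order remainder is absorbed for $|\bs{\eta}|\ge R$, and the split at $|\bs{\eta}|=R$ yields $\|f\|_{H^{n+\lambda}}\le C_2(\|Pf\|_{H^\lambda}+\|f\|_{L^2})$. The absorption of $\|f\|_{L^2}$ via Rellich--Kondrachov and a contradiction argument is also sound. One small remark: invoking analytic hypoellipticity for an elliptic \emph{system} $\theta_r\to\theta_s$ is correct but is the least elementary step you use. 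In the constant-coefficient, compactly supported setting you can bypass it entirely: since $f\in L^2$ has support in $\overline{\Omega}$, Paley--Wiener makes $\widehat f$ entire on $\mathbb{C}^m$, and $p(\bs{\eta})\widehat f(\bs{\eta})=0$ for real $\bs{\eta}$ together with the injectivity of $p(\bs{\eta})$ for $|\bs{\eta}|\ge R$ forces $\widehat f\equiv 0$ by the identity theorem. This gives the same contradiction without appealing to regularity theory for systems.
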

\begin{remark}
From the proof of $3.6.9$ in \cite{Nar}, we can replace the smoothness condition by continuity condition. 
\end{remark}
\subsection{Uniqueness for solutions of Dirichlet problems}
Recall that the bosonic Laplacian $\Dtwo$ maps $C^{\infty}(\Bm\times\Bm,\HK)$ to itself and its expression is given by \eqref{Dtwo}. In this case, we consider the function space as a trivial vector bundle $E=\Bm\times\HK$, in other words, the base $X$ is the unit ball $\Bm$ and for each point $\bx\in\Bm$, the fiber $E_{\bx}=\HK$. It is easy to observe that $\HK$ is isomorphic to $\R^t$ with $t=\dim\HK$ simply by matching the basis elements in both vector spaces. More specifically, let $\{\varphi_j(\bu)\}_{j=1}^t$ be the set of orthonormal basis for $\HK$ with respect to the $L^2$ inner product over the unit sphere, and $\{\bs{e}_j\}_{j=1}^t$ be the standard orthonormal basis of $\R^t$. If we match $\varphi_j$ with $\bs{e}_j$, $j=1,\cdots,t$, then one can see $\HK$ is isomorphic to $\R^t$ and $\|\varphi_j\|_{L^2}=|\bs{e}_j|=1,\ j=1,\cdots,t$. With this isomorphism, the ellipticity of $\Dtwo$ proved in Theorem A.$1$ \cite{DeBie} already implies the ellipticity defined in the previous subsection for differential operators on trivial bundles.
\par
Recall that the Dirichlet problem for bosonic Laplacians with continuous boundary data in the unit ball is given by 
\[
\begin{cases}
\Dtwo f(\bx,\bnu)=0,\ &\text{if}\ \bx\in\mathbb{B}^m, \quad \bnu \in \Bm,\\
f(\bx,\bnu)=g(\bx,\bnu),\ &\text{if}\ \bx\in\Smone, \quad \bnu \in \Bm,
\end{cases}
\]
where $g\in C(\Smone\times\Bm,\HK)$. Since $\Dtwo$ is linear, to prove the uniqueness for continuous solutions to the Dirichlet problem above, we only need to prove that when $g=0$, the only continuous solution is $f=0$. According to Proposition \ref{estimates}, choosing $\lambda=0$, we have
$
\|f\|_{H^{2}}\leq C\|\Dtwo f\|_{L^2}=0,
$
which tells us that $f=0$ almost everywhere in $\Bm\times\Bm$. Further, the continuity of $f$ in $\Bm\times\Bm$ immediately gives us that $f=0$ in $\Bm\times\Bm$. This completes the proof of the uniqueness for solutions to the Dirichlet problem in the unit ball.
%%%%%%%%%%%%%%%%%%%%%%%%%%
\section{Properties of null solutions of bosonic Laplacians}
In this section, we will use the uniqueness of continuous extension of Theorem \ref{Pintegral} to obtain analogs of several results of harmonic functions for null solutions of bosonic Laplacians when $m>4$.
\subsection{Mean value property}
Since the Poisson integral formula for a harmonic function $f$ is given by
$
f(\bx)=\int_{\Smone}\frac{1-|\bx|^2}{|\bx-\ze|^2}f(\ze)dS(\ze),\quad \bx\in\Bm,
$
one can obtain the mean-value property of $f$ over the unit sphere by letting $\bx=0$. This motivates us that choosing $\z=0$ in the Poisson integral given in Theorem \ref{Pintegral} should give us a mean-value property for null solutions of bosonic Laplacians $\Dtwo$ as well. It is worth pointing out that the mean-value property discussed below is with respect to the variable $\bx$, but $f(\bx,\bu)$ also has a mean-value property with respect to $\bu$. This is because $f$ is harmonic with respect to $\bu$ in accordance to the definition of the function space $C^2(\Bm\times\Bm,\HK)$.
\begin{theorem}[Mean-value property: sphere version]\label{prop1}
\hfill\\
Assume $f\in C^2(B(\bs{a},r)\times\Bm,\HK)\cap C(\overline{B(\bs{a},r)}\times\overline{\Bm},\HK)$ and  $\Dtwo f=0$ in ${B(\bs{a},r)}\times{\Bm}$. We have
\be
f(\bs{a},\bnu)=\frac{c_{m,k}}{2}\int_{\Smone}f(\bs{a}+r\ze,\ze\bnu\ze)dS(\ze),\quad  \forall\bnu\in {\Bm},
\ee
where $c_{m,k}$ is given in Lemma \ref{lemma1}.
\end{theorem}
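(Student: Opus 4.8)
The plan is to reduce to the unit ball centred at the origin and then combine the Poisson integral representation of Theorem~\ref{Pintegral} with the uniqueness of continuous solutions of the Dirichlet problem in $\Bm$ established in the previous section; the mean-value identity will then be precisely the Poisson integral evaluated at the centre of the ball.

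First I would set $\tilde f(\bx,\bnu):=f(\bs a+r\bx,\bnu)$ for $(\bx,\bnu)\in\Bm\times\Bm$. Since translations and dilations are conformal, $\Dtwo\tilde f=0$ in $\Bm\times\Bm$; equivalently, this is the elementary identity $\Dtwo\tilde f=r^{2}\,(\Dtwo f)(\bs a+r\,\cdot\,,\cdot)=0$, each of the three terms in \eqref{Dtwo} picking up the same power $r^{2}$ under the substitution (the $\bu$-derivatives being untouched and each $D_{\bx}$ contributing a factor $r$). The regularity hypotheses on $f$ give $\tilde f\in C^{2}(\Bm\times\Bm,\HK)\cap C(\overline{\Bm}\times\overline{\Bm},\HK)$, so $\tilde f$ is a continuous solution of the Dirichlet problem in $\Bm\times\Bm$ with boundary data $h:=\tilde f|_{\Smone\times\Bm}\in C(\Smone\times\Bm,\HK)$. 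By Theorem~\ref{Pintegral} the function $P_B[h]$ is another such solution, and by the uniqueness result of the previous section (valid once $\HK$ is identified with $\R^{\dim\HK}$, rendering $\Dtwo$ elliptic, so that Proposition~\ref{estimates} with $\lambda=0$ forces any continuous null solution with vanishing boundary trace to vanish identically) one gets $\tilde f=P_B[h]$ on $\Bm\times\Bm$. In particular $f(\bs a,\bnu)=\tilde f(0,\bnu)=P_B[h](0,\bnu)$.

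It then remains to evaluate the Poisson kernel \eqref{pkernel} at $\bx=0$: there $1-|\bx|^{2}=1$ and $|\bx-\ze|=1$ for $\ze\in\Smone$, the $\bx$-dependent rotation carried by the kernel reduces to the identity, and the reflected Clifford argument becomes $\ze\,\bom\,\ze$, so that
\[
P_B[h](0,\bnu)=\frac{c_{m,k}}{2}\int_{\Smone}\int_{\Smone}Z_k(\ze\bom\ze,\bnu)\,h(\ze,\bom)\,dS(\bom)\,dS(\ze).
\]
For each fixed $\ze\in\Smone$ the map $\bom\mapsto\ze\bom\ze$ is the reflection of $\bom$ in the hyperplane orthogonal to $\ze$, an involutive isometry of $\Smone$; carrying out the substitution $\bom\mapsto\ze\bom\ze$ (which preserves $dS$ and satisfies $\ze(\ze\bom\ze)\ze=\bom$) turns the inner integral into $\int_{\Smone}Z_k(\bom,\bnu)\,h(\ze,\ze\bom\ze)\,dS(\bom)$. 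Since $\bom\mapsto h(\ze,\ze\bom\ze)$ still lies in $\HK$ — composition with an orthogonal transformation preserves homogeneous harmonic polynomials — the reproducing property of $Z_k$ collapses this to $h(\ze,\ze\bnu\ze)$. Therefore $P_B[h](0,\bnu)=\frac{c_{m,k}}{2}\int_{\Smone}h(\ze,\ze\bnu\ze)\,dS(\ze)=\frac{c_{m,k}}{2}\int_{\Smone}f(\bs a+r\ze,\ze\bnu\ze)\,dS(\ze)$, which is the claimed identity.

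The step I expect to be the main obstacle is this last one: keeping careful track of the auxiliary rotations of the $\bu$- and $\bv$-variables that the Poisson kernel $P_B$ inherits from the specific Cayley transform used to build it, and checking that at the centre of the ball these rotations either disappear or are absorbed by the reflection substitution $\bom\mapsto\ze\bom\ze$, so that the reproducing property of $Z_k$ yields exactly the second argument $\ze\bnu\ze$ rather than a $\ze$-dependent rotation of it. The remaining ingredients — the $r^{2}$-scaling of $\Dtwo$ and the ellipticity/uniqueness machinery — are routine given the earlier results; and the companion mean-value property in the $\bu$-variable needs no separate argument, since $f(\bx,\bu)\in\HK$ is by definition harmonic in $\bu$.
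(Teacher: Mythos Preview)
Your proposal is correct and follows essentially the same route as the paper: reduce to the unit ball by translation and dilation, invoke Theorem~\ref{Pintegral} together with the uniqueness result to write $f=P_B[f|_{\Smone}]$, and then evaluate the Poisson kernel at $\bx=0$. The only cosmetic difference is in the last step: where you perform the measure-preserving substitution $\bom\mapsto\ze\bom\ze$ to bring the inner integral to reproducing-kernel form, the paper instead invokes the reflection identity $Z_k(\ze\bom\ze,\bnu)=Z_k(\bom,\ze\bnu\ze)$ directly (a consequence of the explicit form of $Z_k$ in \cite[Theorem~5.38]{Axler}); your worry about residual $\bx$-dependent rotations at the centre is unfounded, since the kernel in \eqref{pkernel} already has the clean form $Z_k\big((\bx-\ze)\bom(\bx-\ze)/|\bx-\ze|^2,\bnu\big)$ with no further twist on $\bnu$.
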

\begin{remark}
We notice that when $f(\bs{a},\bnu)=f(\bnu)$, the mean-value property above reduces to \cite[Lemma 6]{Dunkl}.
\end{remark}
\begin{proof}
Without loss of generality, we assume that $B(\bs{a},r)=\Bm$. Then, we can obtain the result in $B(\bs{a},r)$ by applying a translation and a dilation. Theorem \ref{Pintegral} and the uniqueness property tell us that
\be
f(\z,\bnu)=\int_{\Smone}\int_{\Smone}P_B(\bx,\ze,\bom,\bnu)f(\ze,\bom)dS(\bom)dS(\ze),
\ee
where $\z,\bnu\in \mathbb{B}^m$. Let $\z=0$, then we have
\begin{align*}
f(0,\bnu)&=\frac{c_{m,k}}{2}\int_{\Smone}\int_{\Smone}Z_k(\ze\bs{\omega}\ze,\bnu)f(\ze,\bom)dS(\bom)dS(\ze)\\
%&=&\frac{c_{m,k}}{2}\int_{\Smone}\int_{\Smone}Z_k\bigg(\frac{\ze\widetilde{(\bs{e}_m\ze+1)}\bu(\bs{e}_m\ze+1)\ze}{|\bs{e}_m\ze+1|^2},\bv\bigg)f(\ze,\bu)dS(\bu)dS(\ze)\\
&=\frac{c_{m,k}}{2}\int_{\Smone}\int_{\Smone}Z_k(\bs{\omega},\ze\bnu\ze)f(\ze,\bom)dS(\bom)dS(\ze)\\
&=\frac{c_{m,k}}{2}\int_{\Smone}f(\ze,\ze\bnu\ze)dS(\ze),
\end{align*}
where we have used the facts that $Z_k(\bs{a}\bu\bs{a},\bv)=Z_k(\bu,\bs{a}\bv\bs{a})$ for $\bs{a}\in\R^m$ and that $Z_k$ is the reproducing kernel of $\mathcal{H}_k$ in the last two steps.
\end{proof}
Further, we can also obtain a volume version of the mean-value property by changing to an integral over the sphere and the radius.
\begin{proposition}[Mean-value property: volume version]\label{prop2}
\hfill\\
Assume $f\in C^2(B(\bs{a},r)\times\Bm,\HK)\cap C(\overline{B(\bs{a},r)}\times\overline{\Bm},\HK)$ and  $\Dtwo f=0$ in ${B(\bs{a},r)}\times{\Bm}$. We have
\be
f(\bs{a},\bnu)=\frac{m+2k-2}{(m-2)V(B(\bs{a},r))}\int_{B(\bs{a},r)}f\bigg(\z,\frac{\bs{\eta}\bnu\bs{\eta}}{|\bs{\eta}|^2}\bigg)d\bx, \quad \forall \bnu\in {\Bm},
\ee
where $\bs{\eta}=\frac{\z-\bs{a}}{|\z-\bs{a}|}$ and $V(B(\bs{a},r))$ denotes the volume of the ball $B(\bs{a},r)$.
\end{proposition}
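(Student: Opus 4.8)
The plan is to obtain the volume version as an immediate consequence of the sphere version (Theorem \ref{prop1}) by integrating in polar coordinates around $\bs{a}$. As in the proof of Theorem \ref{prop1} one may first reduce to working on $B(\bs{a},r)$ directly (the general case follows by a translation and dilation). The key observation is that the hypotheses are inherited by concentric subballs: since $\Dtwo f=0$ throughout $B(\bs{a},r)\times\Bm$ and $f$ extends continuously to $\overline{B(\bs{a},r)}\times\overline{\Bm}$, for every radius $0<\rho\le r$ the function $f$ satisfies the assumptions of Theorem \ref{prop1} on the ball $B(\bs{a},\rho)$, so that
\[
\int_{\Smone}f(\bs{a}+\rho\ze,\ze\bnu\ze)\,dS(\ze)=\frac{2}{c_{m,k}}\,f(\bs{a},\bnu),\qquad \forall\,\bnu\in\Bm,\ 0<\rho\le r .
\]

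Next I would pass to polar coordinates in the volume integral by writing $\bx=\bs{a}+\rho\ze$ with $\rho\in(0,r)$ and $\ze\in\Smone$, so that $d\bx=\rho^{m-1}\,d\rho\,dS(\ze)$ and $\bs{\eta}=\frac{\bx-\bs{a}}{|\bx-\bs{a}|}=\ze$, whence $\bs{\eta}\bnu\bs{\eta}/|\bs{\eta}|^2=\ze\bnu\ze$. Since $f$ is continuous on the compact set $\overline{B(\bs{a},r)}\times\overline{\Bm}$ it is bounded, so Fubini's theorem applies and, using the displayed identity above,
\[
\int_{B(\bs{a},r)}f\!\left(\bx,\frac{\bs{\eta}\bnu\bs{\eta}}{|\bs{\eta}|^2}\right)d\bx
=\int_0^r\rho^{m-1}\left(\int_{\Smone}f(\bs{a}+\rho\ze,\ze\bnu\ze)\,dS(\ze)\right)d\rho
=\frac{2}{c_{m,k}}\,f(\bs{a},\bnu)\int_0^r\rho^{m-1}\,d\rho
=\frac{2}{c_{m,k}}\cdot\frac{r^m}{m}\,f(\bs{a},\bnu).
\]

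Finally I would substitute $c_{m,k}=\frac{2(m+2k-2)}{(m-2)\omega_m}$, so that $\frac{2}{c_{m,k}}=\frac{(m-2)\omega_m}{m+2k-2}$, and invoke the classical identity $V(B(\bs{a},r))=\frac{\omega_m r^m}{m}$; the right-hand side then becomes $\frac{m-2}{m+2k-2}\,V(B(\bs{a},r))\,f(\bs{a},\bnu)$, and solving for $f(\bs{a},\bnu)$ yields exactly the asserted formula. There is no genuine obstacle here: the only points requiring a word of care are the verification that Theorem \ref{prop1} may be applied at every radius $\rho\le r$ (which is the inheritance of the hypotheses by concentric subballs noted above) and the applicability of Fubini's theorem, which is justified by the boundedness of $f$ on the relevant compact set.
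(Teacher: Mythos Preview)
Your proof is correct and follows precisely the approach the paper indicates: the paper does not write out a proof of this proposition but merely remarks that the volume version is obtained from the sphere version ``by changing to an integral over the sphere and the radius,'' which is exactly the polar-coordinate integration you carry out. Your verification of the constants via $c_{m,k}=\frac{2(m+2k-2)}{(m-2)\omega_m}$ and $V(B(\bs{a},r))=\omega_m r^m/m$ is accurate.
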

%\begin{proof}
%Without loss of generality, we let $B(\bs{a},r)=\mathbb{B}^m$. We calculate the integral on the right side by changing to spherical coordinates. Let $\z\in \Bm$ and $\bs{y}=\displaystyle\frac{\z}{|\bx|}\in\Smone$, then 
%\be
%&&\frac{m+2k-2}{(m-2)V(\Bm)}\int_{\Bm}f\bigg(\z,\frac{\bs{\eta}\bv\bs{\eta}}{|\bs{\eta}|^2}\bigg)d\bx\\
%&=&\frac{mc_{m,k}}{2}\int_0^{1}\int_{\Smone}t^{m-1}f\bigg(t\by,\frac{(\bs{y}-\bs{e}_m)\bv(\bs{y}-\bs{e}_m)}{|\bs{y}-\bs{e}_m|^2}\bigg) dS(\bs{y})dt\\
%&=&mf(0,\bv)\int_0^{1}t^{m-1}dt=f(0,\bv),
%\ee 
%where we used Proposition \ref{prop1} in the last second equality. The proof is completed.
%\end{proof}

%%%%%
\subsection{Cauchy's estimates and Liouville-type theorem} 
\hfill\\
Let $\bs{\alpha}=(\alpha_1,\cdots,\alpha_m)$ be a multi-index, where $\alpha_j,\ j=1,\cdots, m$ are non-negative integers. It is easy to see that for each $\ze,\bom\in\Smone$,  the Poisson kernel $P_B(\z,\ze,\bom,\bnu)$ given in \eqref{pkernel}
 is infinitely differentiable in $\mathbb{B}^m$ with respect to $\z$, where $\z,\bnu\in\mathbb{B}^m$. Further, since there are no singular points occur for $P_B(\z,\ze,\bom,\bnu)$ when $\z\in\Bm$ and $\ze\in\Smone$, one observes that $D^{\bs{\beta}}_{\bv}D^{\bs{\alpha}}_{\bx}P_B(\z,\ze,\bom,\bnu)$ are integrable over $\Bm\times\Bm$, where $\bs{\alpha},\ \bs{\beta}$ are multi-indices. \par

Recall that if $f\in C^2(\mathbb{B}^m\times\mathbb{B}^m,\HK)\cap C(\overline{\mathbb{B}^m}\times\overline{\mathbb{B}^m},\HK)$ and $\Dtwo f=0$, then Theorem \ref{Pintegral}  and the uniqueness for solutions of the Dirichlet problems yield
\be
f(\z,\bnu)=\int_{\Smone}\int_{\Smone}P_B(\z,\ze,\bom,\bnu)f(\ze,\bom)dS(\bom)dS(\ze),\quad \bx,\bnu\in\Bm.
\ee
Differentiating under the integral sign, we can see that $f(\z,\bnu)$ is infinitely differentiable with respect to $\z,\bnu$ in $\mathbb{B}^m$. This is an analog of the smoothness property of harmonic functions. Further, we have
\be
D^{\bs{\beta}}_{\bnu}D_{\z}^{\bs{\alpha}}f(\z,\bnu)=\int_{\Smone}\int_{\Smone}f(\ze,\bom)D^{\bs{\beta}}_{\bnu}D_{\z}^{\bs{\alpha}}P_B(\z,\ze,\bom,\bnu)dS(\bom)dS(\ze).
\ee
We can derive an analog of the Cauchy's estimates as follows.
%%%%%%%%%%%%%%%%%%%%%%%%%%%
\begin{theorem}[Cauchy's estimates]\label{CE0}
Let $\bs{\alpha},\ \bs{\beta}$ be multi-indices. Assume $f\in C^2(\Omega\times \Bm,\mathcal{H}_k)$ and $\Dtwo f=0$ in $\Omega\times\Bm$. Then there exists a constant $c_{\bs{\alpha},m,k}$ such that
\begin{equation*}
|D_{\bnu}^{\bs{\beta}}D_{\z}^{\bs{\alpha}}f(\bs{a},\bnu_0)|\leq\frac{c_{\bs{\alpha},m,k}\|f\|_{L^{\infty}(B(\bs{a},r_1)\times B(\bnu_0,r_2),\HK)}}{r_1^{|\bs{\alpha}|}r_2^{|\bs{\beta}|}},
\end{equation*}
for any $B(\bs{a},r_1)\Subset \Omega$ and  $B(\bnu_0,r_2)\Subset \Bm$.
\end{theorem}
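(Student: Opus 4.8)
The plan is to reduce Cauchy's estimates to a local statement on a ball followed by differentiation under the Poisson integral, exactly mimicking the classical harmonic case. First I would observe that the estimate is purely local: given $B(\bs{a},r_1)\Subset\Omega$ and $B(\bnu_0,r_2)\Subset\Bm$, I may pass to a slightly larger concentric ball $B(\bs{a},r)$ with $r_1<r<\mathrm{dist}(\bs{a},\partial\Omega)$ on which $f$ is still a null solution of $\Dtwo$ in $\bx$ and harmonic in $\bnu$; by a translation and dilation (both of which are M\"obius transformations under which $\Dtwo$ transforms covariantly, cf.\ Lemma \ref{lem1} and \cite{DR}) I can assume $\bs{a}=0$ and $r=1$. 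Then Theorem \ref{Pintegral} together with the uniqueness result of Section 4 gives the reproducing formula
\[
f(\z,\bnu)=\int_{\Smone}\int_{\Smone}P_B(\z,\ze,\bom,\bnu)f(\ze,\bom)\,dS(\bom)\,dS(\ze),\qquad \z,\bnu\in\Bm .
\]

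Next I would justify differentiating under the integral sign. As already noted in the paragraph preceding the theorem statement, $P_B(\z,\ze,\bom,\bnu)$ is smooth in $(\z,\bnu)\in\Bm\times\Bm$ with no singularities for $\ze\in\Smone$, so on any compact subset $\overline{B(0,r_1)}\times\overline{B(\bnu_0,r_2)}$ the partial derivatives $D_{\bnu}^{\bs{\beta}}D_{\z}^{\bs{\alpha}}P_B$ are bounded uniformly in $(\ze,\bom)\in\Smone\times\Smone$; this legitimizes
\[
D_{\bnu}^{\bs{\beta}}D_{\z}^{\bs{\alpha}}f(\z,\bnu)=\int_{\Smone}\int_{\Smone}f(\ze,\bom)\,D_{\bnu}^{\bs{\beta}}D_{\z}^{\bs{\alpha}}P_B(\z,\ze,\bom,\bnu)\,dS(\bom)\,dS(\ze).
\]
Evaluating at $(\z,\bnu)=(0,\bnu_0)$ and using $|Z_k(\bu,\bv)|\leq\dim\HK$ for the zeroth-order term together with the chain/Leibniz rule for the derivatives of $P_B$ gives a bound of the form $|D_{\bnu}^{\bs{\beta}}D_{\z}^{\bs{\alpha}}f(0,\bnu_0)|\leq M_{\bs{\alpha},\bs{\beta},m,k}\,\omega_m^2\,\|f\|_{L^\infty(\Smone\times\Smone)}$, where $M_{\bs{\alpha},\bs{\beta},m,k}$ is the supremum of $|D_{\bnu}^{\bs{\beta}}D_{\z}^{\bs{\alpha}}P_B(0,\ze,\bom,\bnu_0)|$ over $\ze,\bom\in\Smone$. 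Finally, undoing the dilation by the factor $r$ (which produces a factor $r^{-|\bs{\alpha}|}$ in the $\z$-derivatives) and shrinking from the sphere of radius $r$ back to $r_1$ contributes the $r_1^{-|\bs{\alpha}|}$; the $\bnu$-variable lives already in the fixed ball $\Bm$, and the claimed $r_2^{-|\bs{\beta}|}$ comes from an analogous rescaling in the $\bnu$-variable, using that $f(\z,\cdot)$ is a homogeneous harmonic polynomial of degree $k$ so its derivatives on $B(\bnu_0,r_2)$ are controlled by its values there via the classical Cauchy estimate for harmonic functions. Replacing $\|f\|_{L^\infty(\Smone\times\Smone)}$ by the equivalent $\|f\|_{L^\infty(B(\bs{a},r_1)\times B(\bnu_0,r_2))}$ (the two are comparable after the above reductions, up to a constant depending only on $m,k$) yields the stated inequality.

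The step I expect to require the most care is bookkeeping the two independent rescalings and confirming the constant genuinely depends only on $\bs{\alpha}$ (and $m,k$) and not on $\bs{\beta}$ separately — or, if it does depend on $\bs{\beta}$, stating the theorem accordingly; the interaction between the $\z$-argument and the $\bnu$-argument inside $Z_k\big(\tfrac{(\z-\ze)\bom(\z-\ze)}{|\z-\ze|^2},\bnu\big)$ means that $\z$-derivatives and $\bnu$-derivatives of $P_B$ do not decouple, so the Leibniz expansion must be handled with a little attention. Once that combinatorial estimate on $D_{\bnu}^{\bs{\beta}}D_{\z}^{\bs{\alpha}}P_B$ near $(0,\bnu_0)$ is in hand, the rest is the standard Poisson-kernel argument, and as an immediate corollary one obtains a Liouville-type theorem: a bounded null solution of $\Dtwo$ on all of $\R^m\times\Bm$ has all first $\z$-derivatives vanishing (let $r_1\to\infty$), hence is constant in $\z$, and then harmonicity plus boundedness in $\bu$ forces it to be constant in $\bu$ as well.
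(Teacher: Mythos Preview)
Your approach is essentially the paper's: represent $f$ by its Poisson integral $P_B[f]$ on the ball centered at $\bs{a}$ (Theorem~\ref{Pintegral} plus the uniqueness from Section~4), differentiate under the integral, bound $D_{\bnu}^{\bs{\beta}}D_{\z}^{\bs{\alpha}}P_B$ at the center using that $P_B(0,\ze,\bom,\bnu)$ is smooth in $\bnu$ and nonsingular for $(\ze,\bom)\in\Smone\times\Smone$, then rescale. The paper performs the combined rescaling $f(r_1\z,r_2\bnu+\bnu_0)$ in one stroke, whereas you separate it into the $\z$-dilation (via conformal covariance of $\Dtwo$) and then the classical harmonic Cauchy estimate in $\bnu$; these amount to the same thing, and your remark that the constant should not depend on $\bs{\beta}$ is exactly why the paper's constant is written $c_{\bs{\alpha},m,k}$ (since $P_B$ is a polynomial of degree $k$ in $\bnu$, only finitely many $\bs{\beta}$ give a nonzero contribution).

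Two small corrections. First, there is no need to pass to a strictly larger ball $r>r_1$: since $B(\bs{a},r_1)\Subset\Omega$ you already have $f$ continuous on $\overline{B(\bs{a},r_1)}$, and taking $r=r_1$ yields the $L^\infty$ norm over $B(\bs{a},r_1)$ directly (with $r>r_1$ you would instead land on $\|f\|_{L^\infty(B(\bs{a},r)\times\cdots)}$). Second, your closing Liouville remark overshoots: the correct conclusion (Theorem~\ref{LT}) is $f=f(\bnu)\in\HK$, not that $f$ is constant in $\bnu$ --- a nonzero element of $\HK$ is bounded on $\Bm$ without being constant.
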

\begin{proof}
Without loss of generality, we may assume $\bs{a}=0$. Firstly, we consider the case  $\Omega =\Bm$ and let $\|f\|_{L^{\infty}(\Bm\times \Bm,\HK)}=M'$ , then for $\bnu_0 \in \Bm$,
\begin{align*}
|D_{\bnu}^{\bs{\beta}}D_{\z}^{\bs{\alpha}}&f(0,\bnu_0)|=\bigg\rvert\int_{\Smone}\int_{\Smone}f(\ze,\bom)D_{\bnu}^{\bs{\beta}}D_{\z}^{\bs{\alpha}}P_B(0,\ze,\bom,\bnu_0)dS(\bom)dS(\ze)\bigg\rvert\\
&\leq M'\int_{\Smone}\int_{\Smone}\big\rvert D_{\bnu}^{\bs{\beta}}D_{\z}^{\bs{\alpha}}P_B(0,\ze,\bom,\bnu_0)\big\rvert dS(\bom)dS(\ze)\\
&=M'\int_{\Smone}\int_{\Smone}\bigg\rvert D_{\bnu}^{\bs{\beta}}\big\rvert_{\bnu=\bnu_0}D_{\z}^{\bs{\alpha}}\big\rvert_{\bx=0}P_B(\bx,\ze,\bom,\bnu)\bigg\rvert dS(\bom)dS(\ze).
\end{align*}
Since there are no singular points of the function
$P_B(\bx,\ze,\bom,\bnu),$ which is a homogeneous polynomial with respect to $\bnu$ (see page 104 in \cite{Axler}) where $\bom\in\Smone,\ \bnu\in\Bm$. This implies that 
$$
\left| D_{\bnu}^{\bs{\beta}}\big\rvert_{\bnu=\bnu_0}D_{\z}^{\bs{\alpha}}\big\rvert_{\bx=0}P_B(\bx,\ze,\bom,\bnu)\right|
$$
is bounded when $\z=0,\ \bom\in\Smone,\ \ze\in\Smone$ and $\bnu=\bnu_0$. Therefore, there exists a constant $c_{\bs{\alpha},m,k}$, which only depends on $\bs{\alpha},m$ and $k$, such that
\be
\int_{\Smone}\bigg\rvert D_{\bnu}^{\bs{\beta}}\big\rvert_{\bnu=\bnu_0}D_{\z}^{\bs{\alpha}}\big\rvert_{\bx=0}P_B(\bx,\ze,\bom,\bnu)\bigg\rvert dS(\bom)dS(\ze)\leq c_{\bs{\alpha},m,k}.
\ee
Hence, we have
$
|D_{\bnu}^{\bs{\beta}}D_{\bx}^{\bs{\alpha}}f(0,\bnu_0)|\leq c_{\bs{\alpha},m,k}M'.
$
Now, suppose $\Dtwo f=0$ and $\|f\|_{L^{\infty}(B(0,r_1)\times B(\bnu_0,r_2),\HK)}=M$. Then we apply the argument above to $f(r_1\z,r_2\bnu+\bnu_0)$ with respect to $\z,\ \bnu\in\Bm$, we obtain
\be
|D_{\bnu}^{\bs{\beta}}D_{\bx}^{\bs{\alpha}}f(0,\bnu_0)|\leq \displaystyle\frac{c_{\bs{\alpha},m,k}'M}{r_1^{|\bs{\alpha}|}r_2^{|\bs{\beta}|}}.
\ee
The proof is complete.
\end{proof}
One can also obtain Cauchy's estimates with an $L^1$ norm as follows.
\begin{proposition}\label{CE}
Let $\bs{\alpha},\ \bs{\beta}$ be multi-indices. Suppose $f\in C^2(\Omega\times\mathbb{B}^m,\mathcal{H}_k)$, $\Dtwo f=0$ in $\Omega\times\Bm$. Then there exists a constant $c_{\bs{\alpha},m,k}$ such that for any $B(\bs{a},r_1)\Subset \Omega$ and $\bnu_0 \in \Bm$, there holds
\be
|D_{\bnu}^{\bs{\beta}}D_{\z}^{\bs{\alpha}}f(\bs{a},\bnu_0)|\leq\frac{c_{\bs{\alpha},m,k}\|f\|_{L^1(B(\bs{a},r_1)\times B(0,1-\frac{1}{4} r_2),\HK)}}{r_1^{m+|\bs{\alpha}|}{r_2}^{m+|\bs{\beta}|}},
\ee
where $r_2$ stands for the distance from $\bnu_0 \in \Bm$ to the unit sphere $\Smone$ and
\be
\|f\|_{L^1(B(\bs{a},r_1)\times B(0,1-\frac{1}{4}r_2),\HK)}:=\int_{B(\bs{a}_1,r_1)}\int_{B(0,1-\frac{1}{4}r_2)}|f(\bx,\bnu)|d\bnu d\bx.
\ee
\end{proposition}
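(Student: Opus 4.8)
The plan is to derive this $L^1$ estimate from the two results proved immediately before it: the $L^\infty$ Cauchy estimate of Theorem~\ref{CE0} and the volume mean-value property of Proposition~\ref{prop2}. Theorem~\ref{CE0} already bounds the mixed derivative by an $L^\infty$ norm on concentric balls, so the entire task is to trade that $L^\infty$ norm for the $L^1$ norm in the statement; this is precisely the classical route by which the $L^1$ Cauchy estimate for harmonic functions is deduced from the $L^\infty$ one together with the mean-value property. First I would apply Theorem~\ref{CE0} with the radii $r_1,r_2$ there replaced by $r_1/2$ and $r_2/2$ --- admissible because $B(\bs{a},r_1/2)\Subset\Omega$ and, since $r_2=\operatorname{dist}(\bnu_0,\Smone)$, also $B(\bnu_0,r_2/2)\Subset\Bm$ --- which gives
\[
|D^{\bs{\beta}}_{\bnu}D^{\bs{\alpha}}_{\z}f(\bs{a},\bnu_0)|\le\frac{2^{|\bs{\alpha}|+|\bs{\beta}|}\,c_{\bs{\alpha},m,k}}{r_1^{|\bs{\alpha}|}\,r_2^{|\bs{\beta}|}}\,\|f\|_{L^\infty(B(\bs{a},r_1/2)\times B(\bnu_0,r_2/2),\HK)}.
\]
Hence everything reduces to the pointwise bound
\[
|f(\bs{b},\bs{\mu})|\le \frac{C_{m,k}}{r_1^{m}}\,\|f\|_{L^1(B(\bs{a},r_1)\times B(0,1-\tfrac14 r_2),\HK)}\qquad\text{for every }\bs{b}\in B(\bs{a},\tfrac12 r_1),\ \bs{\mu}\in B(\bnu_0,\tfrac12 r_2).
\]

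To obtain this pointwise bound --- which I expect to be the crux --- I would fix such $(\bs{b},\bs{\mu})$ and apply Proposition~\ref{prop2} at the centre $\bs{b}$ with radius $r_1/2$ (legitimate since $\overline{B(\bs{b},r_1/2)}\subset B(\bs{a},r_1)\Subset\Omega$, while $f(\bx,\cdot)\in\HK$ is a polynomial, hence smooth on $\overline{\Bm}$), getting
\[
f(\bs{b},\bs{\mu})=\frac{m+2k-2}{(m-2)\,V(B(\bs{b},r_1/2))}\int_{B(\bs{b},r_1/2)}f\Big(\bx,\tfrac{\et\bs{\mu}\et}{|\et|^{2}}\Big)\,d\bx,\qquad\et=\tfrac{\bx-\bs{b}}{|\bx-\bs{b}|}.
\]
The one genuinely delicate point --- and the reason the $L^1$ integration in the $\bnu$-variable is over the ball $B(0,1-\tfrac14 r_2)$ centred at the origin rather than one centred at $\bnu_0$ --- is that the second slot of $f$ carries the $\bx$-dependent rotation $\et\bs{\mu}\et$. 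Since $|\et|=1$ we have $|\et\bs{\mu}\et|=|\bs{\mu}|$, and because $|\bnu_0|=1-r_2$ and $|\bs{\mu}-\bnu_0|<r_2/2$ we get $|\bs{\mu}|<1-\tfrac12 r_2<1-\tfrac14 r_2$; hence $\et\bs{\mu}\et\in B(0,1-\tfrac14 r_2)$ for all $\bx$. For fixed $\bx$, $\bnu\mapsto f(\bx,\bnu)$ is a homogeneous polynomial of degree $k$, and on the finite-dimensional space $\HK$ all norms are equivalent with constants depending only on $m,k$ (through $\dim\HK$), so
\begin{align*}
\sup_{|\bs{\xi}|\le 1-\tfrac14 r_2}|f(\bx,\bs{\xi})|\ \le\ \sup_{\Smone}|f(\bx,\cdot)|\ &\le\ C_{m,k}\int_{\Smone}|f(\bx,\bs{\xi})|\,dS(\bs{\xi})\\
&=\ \frac{C_{m,k}(m+k)}{(1-\tfrac14 r_2)^{m+k}}\int_{B(0,1-\tfrac14 r_2)}|f(\bx,\bs{\xi})|\,d\bs{\xi},
\end{align*}
the equality being homogeneity in $\bnu$, and the prefactor being bounded by a dimensional constant because $1-\tfrac14 r_2\ge\tfrac34$. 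Inserting this into the mean-value identity, writing $V(B(\bs{b},r_1/2))$ as a dimensional multiple of $r_1^m$, and enlarging $B(\bs{b},r_1/2)$ to $B(\bs{a},r_1)$ yields the pointwise bound.

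Combining the two displays and absorbing all constants depending on $m,k,\bs{\alpha},\bs{\beta}$ into a single $c_{\bs{\alpha},m,k}$ then gives an estimate with $r_1^{-m-|\bs{\alpha}|}r_2^{-|\bs{\beta}|}$; since $0<r_2\le 1$ one has $r_2^{-|\bs{\beta}|}\le r_2^{-m-|\bs{\beta}|}$, so this is in fact slightly stronger than the claimed bound and in particular implies it. The only obstacle worth flagging is the middle step: one must carry the $\bx$-dependent rotation $\et\bs{\mu}\et$ through the mean-value formula and check that all such rotated points remain inside the fixed ball $B(0,1-\tfrac14 r_2)$; once that bookkeeping is done, the remainder is the standard harmonic-function argument transplanted to the two-variable, $\HK$-valued setting.
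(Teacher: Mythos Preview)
Your argument is correct and follows the same two-step scaffold as the paper: first invoke the $L^\infty$ Cauchy estimate (Theorem~\ref{CE0}) at half-radii, then upgrade the resulting $L^\infty$ bound to an $L^1$ bound via the mean-value identity of Proposition~\ref{prop2} in the $\bx$-variable, tracking where the rotated point $\et\bs{\mu}\et$ lands. The one substantive difference is the treatment of the $\bnu$-variable. After reaching the integrand $f(\bx,\et\bnu_1\et)$, the paper applies the \emph{harmonic} mean-value property in $\bnu$ over a ball of radius $r_2/2$ centred at the rotated point $\bnu_2=\et\bnu_1\et$; this is what produces the factor $r_2^{-m}$ and what necessitates the containment $B(\bnu_2,r_2/2)\subset B(0,1-\tfrac14 r_2)$. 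You instead exploit that $\HK$ is finite-dimensional, so $\sup_{\Smone}|f(\bx,\cdot)|$ is dominated by $\int_{\Smone}|f(\bx,\cdot)|\,dS$ with a constant depending only on $m,k$, and then pass to the ball $B(0,1-\tfrac14 r_2)$ by homogeneity; since $1-\tfrac14 r_2\ge\tfrac34$ this costs only a dimensional constant and no power of $r_2$. Your route is thus slightly sharper --- it yields $r_2^{-|\bs{\beta}|}$ rather than $r_2^{-m-|\bs{\beta}|}$ --- and also sidesteps a minor inconsistency in the paper's proof, where the $L^\infty$ norm is taken over $B(\bnu_0,r_2/2)$ but the maximising point is then located only in $\overline{B(\bnu_0,r_2/4)}$.
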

\begin{proof} Let $B(\bs{a},r_1)\Subset \Omega$ and $\bnu_0 \in \Bm$. Without loss of generality, we may assume that $\bs{a}=0$. Put $M=\|f\|_{L^{\infty}(B(0,r_1/2)\times B(\bnu_0,r_2/2),\HK)}<\infty$.  Then we apply the argument in the theorem above to $f(r_1\z/2,r_2\bnu/2+\bnu_0)$ with respect to $\z,\ \bnu\in\Bm$ to have
\begin{eqnarray}\label{Cestimate}
|D_{\bnu}^{\bs{\beta}}D_{\bx}^{\bs{\alpha}}f(0,\bnu_0)|\leq \frac{c_{\bs{\alpha},m,k}M}{r_1^{|\bs{\alpha}|}r_2^{|\bs{\beta}|}}.
\end{eqnarray}
Next, we assume that $f$ obtains its maximum value $M$ over $\overline{B(0,r_1/2)}\times\overline{B(\bnu_0,r_2/4)}$ at the point $(\bx_1,\bnu_1)$. We can see that $$B(\bx_1,\frac{r_1}{2})\subset B(0,r_1),\ B(\bnu_1,\frac{r_2}{2})\subset B(\bnu_0,\frac{3r_2}{4}).$$ Therefore, in accordance to the mean-value property, we have
\be
M=f(\bx_1,\bnu_1)=\frac{m+2k-2}{(m-2)V(B(\bs{x}_1,\frac{r_1}{2}))}\int_{B(\bs{x}_1,\frac{r_1}{2})}f\bigg(\bx,\frac{\bs{\eta}\bnu_1\bs{\eta}}{|\bs{\eta}|^2}\bigg)d\bx.
\ee
Notice that if we let $\frac{\bs{\eta}\bnu_1\bs{\eta}}{|\bs{\eta}|^2}=\bnu_2$, which means that $\bnu_2$ is obtained from $\bnu_1$ by a rotation, then $f(\bx,\bnu_2)\in\HK$ with respect to $\bnu_2\in \Bm$. Further, one can notice that $B(\bnu_2,\frac{r_2}{2})\subset B(0,1-\frac{1}{4}r_2)$. Hence, one can use the mean-value property of harmonic functions to obtain that
\begin{align*}
|f(\bx,\bnu_2)|=\frac{1}{V(B(\bnu_2,\frac{r_2}{2}))}\bigg\rvert\int_{B(\bnu_2,\frac{r_2}{2})}f(\bx,\bnu)d\bnu\bigg\rvert \leq \frac{c_1}{r_2^{m}}\int_{B(0,1-\frac{1}{4} r_2)}|f(\bx,\bnu)|d\bnu, 
%\leq c_2r_2^{-m}\int_{\Smone}|f(\bx,\bv)|dS(\bv),
\end{align*}
where $\ c_1$ is a positive constant only depending on $m$. Therefore, we have 
\begin{align*}
M=&f(\bx_1,\bnu_1)\leq \frac{c'}{r_2^{m}V(B(\bs{x}_1,\frac{r_1}{2}))}\int_{B(\bs{x}_1,r_1/2)}\int_{B(0,1-\frac{1}{4} r_2)}|f(\bx,\bnu)|d\bnu d\bx\\
&\leq c''r_1^{-m}r_2^{-m}\|f\|_{L^1(B(0,r_1)\times B(0,1-\frac{1}{4} r_2),\HK)},
\end{align*}
where the last inequality comes from the fact that $B(\bx_1,r_1/2)\subset B(0,r_1)$ and $c',c''$ are positive constants only depending on $m$ and $k$. Combining with (\ref{Cestimate}) completes the proof.
\end{proof}
\begin{remark}
In this proposition, one might notice that the domain of $\bnu$ in the $L^1$ norm is $B(0,1-\frac{1}{4} r_2)$ instead of $B(\bnu_0,r)$, which is the form of the domain of $\bs{a}$. This is because of the difficulty caused by $\bnu_2$ in the proof above. Actually, $\bnu_2$ is obtained by a rotation of $\bnu_1$, and it can be anywhere on the sphere with the radius $|\bnu_1|$. Hence, it seems impossible to cover $\bnu_2$ with a ball $B(\bnu_0,r)$ by a similar argument applied for $\bs{a}$.  Also, the $L^1$ norm defined here is equivalent to the norm defined in (\ref{norm}) up to a constant with similar argument given in \eqref{equivalence}. This constant only depends on $m$ and $k$ considering $3/4\leq1-\frac{1}{4} r_2\leq 1$, since $0\leq r_2\leq 1$.
\end{remark}
%%%%%%%Liouville Theorem
One can apply Cauchy's estimates to obtain an analog of Liouville's Theorem for null solutions of bosonic Laplacians as follows.
\begin{theorem}[Liouville-type Theorem]\label{LT}
 Suppose $f\in C^2(\R^m\times\mathbb{B}^m,\HK) \cap L^\infty(\R^m\times\mathbb{B}^m,\HK)$ and $\Dtwo f=0$ in $\R^m\times\mathbb{B}^m$. Then $f=f(\bnu)\in\HK$.
\end{theorem}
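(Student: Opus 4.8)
The plan is to adapt the classical proof of Liouville's theorem for bounded harmonic functions, using the Cauchy estimates of Theorem~\ref{CE0} in place of the harmonic Cauchy estimate. The point is that here the spatial variable ranges over all of $\R^m$, so for any $\bs{a}\in\R^m$ the inclusion $B(\bs{a},r_1)\Subset\R^m$ holds for \emph{every} $r_1>0$; feeding this into Theorem~\ref{CE0} together with the global $L^{\infty}$ bound on $f$ and letting $r_1\to\infty$ will annihilate every $\bx$-derivative of $f$. The $\HK$-valuedness in the variable $\bnu$ is then not something to be proved at all, since it is part of the definition of the target space $C^2(\R^m\times\Bm,\HK)$.

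In detail, I would set $M:=\|f\|_{L^{\infty}(\R^m\times\Bm)}<\infty$, fix an arbitrary point $(\bs{a},\bnu_0)\in\R^m\times\Bm$, and pick $r_2\in(0,1-|\bnu_0|)$ so that $B(\bnu_0,r_2)\Subset\Bm$. Applying Theorem~\ref{CE0} with $\bs{\beta}=0$ and a multi-index $\bs{\alpha}$ of length one — so that $D^{\bs{\alpha}}_{\bx}=\partial_{x_j}$ for some $j\in\{1,\cdots,m\}$ — yields, for every $r_1>0$,
\[
|\partial_{x_j}f(\bs{a},\bnu_0)|\leq\frac{c_{\bs{\alpha},m,k}\,\|f\|_{L^{\infty}(B(\bs{a},r_1)\times B(\bnu_0,r_2),\HK)}}{r_1}\leq\frac{c_{\bs{\alpha},m,k}\,M}{r_1}.
\]
Letting $r_1\to\infty$ forces $\partial_{x_j}f(\bs{a},\bnu_0)=0$; since $j$ and $(\bs{a},\bnu_0)$ were arbitrary, all first-order $\bx$-partials of $f$ vanish identically on $\R^m\times\Bm$, and the connectedness of $\R^m$ gives that $f(\bx,\bnu)$ is independent of $\bx$, say $f(\bx,\bnu)=f(\bnu)$. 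Finally, by the definition of $C^2(\R^m\times\Bm,\HK)$, for each fixed $\bx$ the map $\bnu\mapsto f(\bx,\bnu)$ coincides on $\Bm$ with an element of $\HK$, so $f=f(\bnu)\in\HK$, as claimed.

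I do not expect a genuine obstacle here; the scheme is exactly the classical one, and all the real work sits inside Theorem~\ref{CE0}. The only point that needs a moment's attention is verifying that the constant $c_{\bs{\alpha},m,k}$ provided by Theorem~\ref{CE0} is truly independent of $r_1$ — which it is, since it bounds derivatives of the fixed Poisson kernel $P_B$ on the unit ball, the $r_1$- and $r_2$-factors in the estimate being produced solely by the rescaling $f\mapsto f(r_1\bx,\,r_2\bnu+\bnu_0)$ — so that the passage to the limit $r_1\to\infty$ is legitimate. I would also keep in mind that this section runs under the standing assumption $m>4$, which ensures that $\Dtwo$ and the Poisson kernel entering Theorem~\ref{CE0} are well defined.
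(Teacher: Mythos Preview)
Your proposal is correct and follows essentially the same approach as the paper: apply the Cauchy estimate of Theorem~\ref{CE0} with $|\bs{\alpha}|=1$, $\bs{\beta}=0$, use the global bound $M$, and let $r_1\to\infty$ to kill every $\bx$-derivative. Your version is in fact more carefully written than the paper's, which simply says ``from the proof of the Cauchy's estimates above'' without explicitly fixing $r_2$ or remarking on the $r_1$-independence of the constant.
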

\begin{proof}
Suppose $f$ is bounded by $M$ in $\R^m\times\Bm$. Let $B(\bs{a},R)$ be an arbitrary ball, then from the proof of the Cauchy's estimates above, one can immediately obtain that
$
|\nabla_{\bx}f(\bs{a},\bnu)|\leq \frac{c_{m,k}M}{R}.
$
Since $M$ does not depend on $R$, we let $R\rightarrow \infty$, which gives us that $|\nabla_{\bx}f(\bs{a},\bnu)|=0$. Hence, $f=f(\bnu)\in\HK$.
\end{proof}
An immediate consequence of the theorem given above is the following.
\begin{proposition} Let $1\leq l \in {\mathbb N}$. Suppose $f\in C^2(\R^m\times\mathbb{B}^m,\HK)$, $\Dtwo f=0$ in $\R^m\times\mathbb{B}^m$ and
$$\|f\|_{L^1(B(0,R)\times\Bm,\HK)}=o(R^{l+m})  \quad \text{as } R\rightarrow \infty.
$$
 Then $f$ is a polynomial of $\bx$ with degree less than $l$.
\end{proposition}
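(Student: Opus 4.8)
The plan is to mimic the classical proof that a harmonic function with polynomial growth of order $l$ must be a polynomial of degree less than $l$, using the $L^1$-type Cauchy estimate of Proposition~\ref{CE} in place of the usual one and letting the radius $r_1$ in the $\bx$-variable tend to infinity. First I would observe that Proposition~\ref{CE} applies with $\Omega=\R^m$, and that the interior smoothness of null solutions of $\Dtwo$ (obtained above from the Poisson representation on balls, applied here on each $B(\bs{a},r)\times\Bm$) guarantees that $\bx\mapsto f(\bx,\bnu)$ is infinitely differentiable on $\R^m$; in particular the derivatives $D_{\bx}^{\bs{\alpha}}f$ with $|\bs{\alpha}|=l$ below are well-defined and continuous.

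Next, fix an arbitrary $\bs{a}\in\R^m$, an arbitrary $\bnu_0\in\Bm$, and a multi-index $\bs{\alpha}$ with $|\bs{\alpha}|=l$. Applying Proposition~\ref{CE} with $\bs{\beta}=0$, with the ball $B(\bs{a},r_1)\Subset\R^m$ (admissible for every $r_1>0$), and with $r_2=1-|\bnu_0|>0$ held fixed, gives
\begin{equation*}
|D_{\bx}^{\bs{\alpha}}f(\bs{a},\bnu_0)|\leq \frac{c_{\bs{\alpha},m,k}}{r_1^{\,m+l}\,r_2^{\,m}}\,\|f\|_{L^1(B(\bs{a},r_1)\times B(0,1-\frac{1}{4} r_2),\HK)}.
\end{equation*}
Since $B(\bs{a},r_1)\subset B(0,|\bs{a}|+r_1)$ and $B(0,1-\frac{1}{4} r_2)\subset\Bm$, the $L^1$ norm on the right is at most $\|f\|_{L^1(B(0,|\bs{a}|+r_1)\times\Bm,\HK)}$, which by hypothesis is $o\big((|\bs{a}|+r_1)^{l+m}\big)$ as $r_1\to\infty$. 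Because $\bs{a}$ and $r_2$ are frozen, $(|\bs{a}|+r_1)^{l+m}/r_1^{\,m+l}\to 1$ as $r_1\to\infty$, so the whole upper bound tends to $0$; hence $D_{\bx}^{\bs{\alpha}}f(\bs{a},\bnu_0)=0$. As $\bs{a}\in\R^m$ and $\bnu_0\in\Bm$ were arbitrary, $D_{\bx}^{\bs{\alpha}}f\equiv 0$ on $\R^m\times\Bm$ for every multi-index $\bs{\alpha}$ with $|\bs{\alpha}|=l$.

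To conclude, for each fixed $\bnu\in\Bm$ the map $\bx\mapsto f(\bx,\bnu)$ is smooth on the connected set $\R^m$ with all partial derivatives of order $l$ vanishing identically, so by Taylor's theorem it coincides with its degree-$(l-1)$ Taylor polynomial at the origin; thus $f$ is a polynomial in $\bx$ of degree less than $l$, whose coefficients lie in $\HK$ and depend on $\bnu$. The argument is short once Proposition~\ref{CE} is in hand, and I expect the only delicate point to be the bookkeeping of the two radii: one must freeze $\bnu_0$ (hence $r_2$), the basepoint $\bs{a}$, and the constant $c_{\bs{\alpha},m,k}$ before sending $r_1\to\infty$, so that the $o$-term in the numerator genuinely dominates $r_1^{\,m+l}$ in the denominator — which is precisely where the independence of $c_{\bs{\alpha},m,k}$ from $r_1$ in Proposition~\ref{CE} is used.
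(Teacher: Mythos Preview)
Your proposal is correct and follows essentially the same route as the paper: apply the $L^1$ Cauchy estimate of Proposition~\ref{CE} at an arbitrary base point with $|\bs{\alpha}|=l$, enlarge $B(\bs{a},r_1)$ to a ball centred at the origin, invoke the growth hypothesis, and send $r_1\to\infty$ with $\bs{a}$, $\bnu_0$ and $r_2$ frozen. Your write-up is in fact slightly more careful than the paper's (you make explicit the choice $r_2=1-|\bnu_0|$, the inclusion $B(0,1-\tfrac14 r_2)\subset\Bm$, and the appeal to Taylor's theorem at the end), but the argument is the same.
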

\begin{proof}
Let $(\bs{a},\bnu)$ be an arbitrary point in $\R^m\times\Bm$, and $B(\bnu,r_2)\subset\Bm$. Now, we denote $|\bs{a}|=r_1$, then one can see that $B(\bs{a},R)\subset B(0,R+2r_1)$. In accordance to the Cauchy's estimates, we have, for any multi-index $\bs{\alpha}$, 
\begin{align*}
|D_{\bx}^{\bs{\alpha}}f(\bs{a},\bnu)|&\leq\frac{c_{\bs{\alpha},m,k}\|f\|_{L^1(B(\bs{a},R)\times \Bm,\HK)}}{R^{m+|\bs{\alpha}|}r_2^{m}}\leq \frac{c_{\bs{\alpha},m,k}\|f\|_{L^1(B(0,R+2r_1)\times \Bm,\HK)}}{R^{m+|\bs{\alpha}|}r_2^{m}}\\
&=c_{\bs{\alpha},m,k}\frac{o((R+2r_1)^{l+m})}{R^{m+|\bs{\alpha}|}r_2^{m}},\ \text{as } R\rightarrow\infty.
\end{align*}
Since $r_1,r_2$ are fixed, and $c_{\bs{\alpha},m}$ only depends on $\bs{\alpha},m$, if we let $|\bs{\alpha}|=l$ and $R\rightarrow\infty$, we have $|D_{\bx}^{\bs{\alpha}}f(\bs{a},\bnu)|=0$ for any $|\bs{\alpha}|=l$. Further, since $(\bs{a},\bnu)$ is arbitrary in the domain, it implies that $|D_{\bx}^{\bs{\alpha}}f(\bx,\bnu)|=0$ for any $|\bs{\alpha}|=l$ in $\R^m\times\Bm$. Hence, $f$ is a polynomial of $\bx$ with degree less than $l$.
\end{proof}

%%%%%%%%
The next result provides the convergence property.  
\begin{proposition}
Suppose that there is a sequence of functions $\{f_j(\bx,\bnu)\}\subset C^2(\Omega\times\Bm,\HK)$ such that $\Dtwo f_j=0$ for any $1 \leq j \in {\mathbb N}$. Further, assume that  $\{\|f_j\|_{L^1(K\times \Bm,\HK)}\}_{j=1}^{\infty}$ is uniformly bounded, where $K$ is an arbitrary compact subset of $\Omega$. Then, there exists a subsequence $\{f_{j_k}\}$ and a function $f(\bx,\bnu)\in C^2(\Omega\times\Bm,\HK)$  such that for every multi-index $\bs{\alpha}$ and any compact subset $K_1\times K_2$ of $\Omega\times\Bm$, $\{ D_{\bx}^{\bs{\alpha}}f_{j_k}\}$ converges uniformly to  $D_{\bx}^{\bs{\alpha}}f$ in $K_1\times K_2$. Moreover, $f\in C^2(\Omega\times\Bm,\HK)$ and $\Dtwo f=0$ in $\Omega\times\Bm$.
\end{proposition}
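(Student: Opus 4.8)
This is a Montel-type (normal families) statement, and the natural route is to combine the $L^1$-Cauchy estimates of Proposition \ref{CE} with the Arzel\`a--Ascoli theorem and a diagonal argument. First I would fix exhaustions: pick compacts $K_1\Subset K_2\Subset\cdots$ with $\bigcup_n K_n=\Omega$, and set $L_n=\overline{B(0,1-1/n)}\Subset\Bm$, so that every compact subset of $\Omega\times\Bm$ lies in some $K_n\times L_n$. On a fixed $K_n\times L_n$, put $r_1=\tfrac12\,\mathrm{dist}(K_n,\partial\Omega)>0$ and $r_2=\mathrm{dist}(L_n,\Smone)>0$; these are uniform over the compact. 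Proposition \ref{CE} then gives, for every multi-index $\bs{\alpha}$ and every $(\bs{a},\bnu_0)\in K_n\times L_n$,
\[
|D_{\bnu}^{\bs{\beta}}D_{\bx}^{\bs{\alpha}}f_j(\bs{a},\bnu_0)|\le \frac{c_{\bs{\alpha},m,k}}{r_1^{m+|\bs{\alpha}|}r_2^{m+|\bs{\beta}|}}\,\|f_j\|_{L^1(B(\bs{a},r_1)\times B(0,1-\frac14 r_2),\HK)}\le \frac{c_{\bs{\alpha},m,k}}{r_1^{m+|\bs{\alpha}|}r_2^{m+|\bs{\beta}|}}\,C_n,
\]
where, since $B(0,1-\frac14 r_2)\subset\Bm$ and $B(\bs{a},r_1)$ stays inside a fixed compact $K\Subset\Omega$ as $\bs{a}$ ranges over $K_n$, the constant $C_n$ is the uniform bound assumed for $\{\|f_j\|_{L^1(K\times\Bm,\HK)}\}$. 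Thus for each $\bs{\alpha}$ the sequence $\{D_{\bx}^{\bs{\alpha}}f_j\}$ is bounded on $K_n\times L_n$ by a constant independent of $j$.

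Next I would upgrade boundedness to equicontinuity. Since $\{D_{\bx}^{\bs{\gamma}}f_j\}$ is uniformly bounded on $K_{n+1}\times L_{n+1}$ for all $|\bs{\gamma}|\le|\bs{\alpha}|+1$, the gradient bound gives local equi-Lipschitz continuity of $\{D_{\bx}^{\bs{\alpha}}f_j\}$ in $\bx$ on $K_n\times L_n$; the $\bnu$-dependence is that of a homogeneous harmonic polynomial of the fixed degree $k$, which together with the bound on $D_{\bnu}^{\bs{\beta}}D_{\bx}^{\bs{\alpha}}f_j$ yields equicontinuity in $\bnu$ as well. By Arzel\`a--Ascoli, together with a diagonal extraction over the countably many pairs $(n,\bs{\alpha})$, we obtain a single subsequence $\{f_{j_k}\}$ such that for every multi-index $\bs{\alpha}$ the sequence $\{D_{\bx}^{\bs{\alpha}}f_{j_k}\}$ converges uniformly on every compact subset of $\Omega\times\Bm$. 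Writing $f$ for the $\bs{\alpha}=\bs{0}$ limit, uniform local convergence of the functions together with that of their $\bx$-derivatives forces the limit of $D_{\bx}^{\bs{\alpha}}f_{j_k}$ to be $D_{\bx}^{\bs{\alpha}}f$.

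It remains to identify the limit space and to pass $\Dtwo$ to the limit. For fixed $\bx$, $f(\bx,\cdot)$ is a uniform limit of elements of the finite-dimensional space $\HK$, hence $f(\bx,\cdot)\in\HK$. Fixing points $\bu^{(1)},\dots,\bu^{(N)}\in\Smone$ that determine a homogeneous degree-$k$ harmonic polynomial, the coefficients of $f(\bx,\cdot)$ are fixed linear combinations of the values $f(\bx,\bu^{(l)})$, so they inherit $C^\infty$-dependence on $\bx$ (all $\bx$-derivatives converging locally uniformly); consequently $f$ is jointly smooth in $(\bx,\bu)$, and in particular $f\in C^2(\Omega\times\Bm,\HK)$. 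Finally, $\Dtwo$ is a differential operator of order two in $\bx$ and of order two in $\bu$ with coefficients that are polynomials in $\bu$ (hence locally bounded); since $D_{\bx}^{\bs{\alpha}}f_{j_k}\to D_{\bx}^{\bs{\alpha}}f$ locally uniformly for $|\bs{\alpha}|\le 2$ and, by the finite-dimensional argument just given, the $\bu$-derivatives converge locally uniformly too, we get $\Dtwo f_{j_k}\to\Dtwo f$ locally uniformly; as $\Dtwo f_{j_k}=0$ for all $k$, this yields $\Dtwo f=0$ in $\Omega\times\Bm$.

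The exhaustion and diagonal argument are routine. The point requiring a little care is that Proposition \ref{CE} is tailored to $\bx$-derivatives, so the equicontinuity in $\bnu$, the membership $f\in C^2(\Omega\times\Bm,\HK)$, and the local uniform convergence of the $\bu$-derivatives needed to commute $\Dtwo$ with the limit must be extracted from the fixed finite dimension of $\HK$ rather than from the Cauchy estimate alone; this is where I would be most careful, although it is not deep.
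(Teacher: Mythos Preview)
Your argument is correct and shares the same backbone as the paper's proof: both invoke the $L^1$--Cauchy estimates (Proposition \ref{CE}) to get uniform bounds on $D_{\bx}^{\bs{\alpha}}f_j$ over compacts, and then apply Arzel\`a--Ascoli to extract a convergent subsequence. Your exhaustion--plus--diagonal scheme is the honest version of what the paper merely gestures at with ``since $\bs{a},K_1,K_2$ are arbitrary.''

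The genuine difference lies in how $\Dtwo f=0$ and the convergence of derivatives are obtained. You proceed \emph{directly}: apply Arzel\`a--Ascoli to each $D_{\bx}^{\bs{\alpha}}f_j$ separately, use the finite-dimensionality of $\HK$ to control the $\bnu$-direction, and then pass the operator $\Dtwo$ termwise through the locally uniform limit. The paper instead uses the \emph{Poisson integral representation}: it only extracts a uniformly convergent subsequence of $\{f_{j_k}\}$ itself, then writes each $f_{j_k}$ as a Poisson integral over a ball, passes to the limit inside the integral by bounded convergence, and reads off both $\Dtwo f=0$ and the convergence of $D_{\bx}^{\bs{\alpha}}f_{j_k}$ by differentiating the Poisson kernel under the integral sign. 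Your route is more self-contained and transparently robust (it would work for any elliptic operator with analogous Cauchy estimates), while the paper's route exploits the specific solvability theory developed earlier and avoids the finite-dimensional bookkeeping for the $\bnu$-variable that you flagged as the delicate step. Either way, the proof goes through.
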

\begin{proof}
Here, we only need to show that there exist a subsequence $\{f_{j_k}\}$ and a function $f(\bx,\bnu)\in C^2(\Omega\times\Bm,\HK)$  such that  for each multi-index $\bs{\alpha}$ and each compact subset $K_1\times K_2$ of $\Omega\times\Bm$, $\{D^{\bs{\alpha}}f_j\}$ converges uniformly to  $D^{\bs{\alpha}}f$ in $K_1\times K_2$ and $\Dtwo f=0$ in $K_1\times K_2$.
\par
Let $\bs{a}$ be an arbitrary point in $\Omega$, and $B(\bs{a},2r_1)\subset\Omega,\ B(0,2r_2)\subset\Bm$. We choose $K=\overline{B(\bs{a},2r_2)},\ K_1=\overline{B(\bs{a},r_1)},\ K_2=\overline{B(0,r_2)}$. According to the assumption, $\{\|f_j\|_{L^1(K\times\Bm,\HK)}\}$ is uniformly bounded by some $M>0$. Then, in accordance to the Cauchy's estimates given in Theorem \ref{CE0}, we have that for all $(\bx,\bnu)\in K_1\times K_2$,
$
|D_{\z}^{\bs{\alpha}}f_j(\bx,\bnu)|\leq\frac{c_{\bs{\alpha},m,k}M}{r_1^{m+|\bs{\alpha}|}r_2^m}.
$
 By choosing $|\bs{\alpha}|=0,1$ above respectively, one can see that $\{f_j\}$ is uniformly bounded and uniformly equicontinuous in $K_1\times K_2$, Arzel\`a-Ascoli Theorem tells us that there exists a subsequence $\{f_{j_k}\}$ converges uniformly to a function $f\in C^2(K_1\times K_2,\HK)$ in $K_1\times K_2$. Since $\bs{a},\ K_1,\ K_2$ are arbitrary, we conclude that $f\in C^2(\Omega\times\Bm,\HK)$.
\par
Further, $\Dtwo f_{j_k}=0$ tells us that
\be
f_{j_k}(\z,\bnu)=\int_{\Smone}\int_{\Smone}P_B(\bx,\ze,\bom,\bnu)f_{j_k}(\ze,\bom)dS(\bom)dS(\ze),
\ee 
for every integer $j_k$, and $\z,\ \bnu\in\Bm$. Since $\{f_{j_k}\}$ is uniformly bounded in every compact subset $K_1\times K_2\subset\Omega\times\Bm$, we can take the limit on both sides to obtain
\be
f(\z,\bnu)=\int_{\Smone}\int_{\Smone}P_B(\bx,\ze,\bom,\bnu)f(\ze,\bom)dS(\bom)dS(\ze)
\ee
for every $\z\in K_1,\ \bnu\in K_2$, hence, $\Dtwo f=0$ in $K_1\times K_2$. For a multi-index $\bs{\alpha}$, we have
\be
D_{\bx}^{\bs{\alpha}}f_{j_k}(\z,\bnu)=\int_{\Smone}\int_{\Smone}D_{\bx}^{\bs{\alpha}}P_B(\bx,\ze,\bom,\bnu)f_{j_k}(\ze,\bom)dS(\bom)dS(\ze).
\ee
Since $Z_k(\bom,\bnu)$ is a degree-$k$ homogeneous polynomial in $\bom$ and $\bnu$, we observe that for every compact subset $K_1\times K_2\subset\Omega\times\Bm$, 
$
D_{\bx}^{\bs{\alpha}}P_B(\bx,\ze,\bom,\bnu)
$
is uniformly bounded in $(\z,\ze,\bom,\bnu)\in K_1\times\Smone\times\Smone\times K_2$. Thus
\begin{align*}
&\lim_{j\rightarrow \infty}D_{\bx}^{\bs{\alpha}}f_{j_k}(\z,\bnu)
=\lim_{j\rightarrow \infty}\int_{\Smone}\int_{\Smone}D_{\bx}^{\bs{\alpha}}P_B(\bx,\ze,\bom,\bnu)f_{j_k}(\ze,\bom)dS(\bom)dS(\ze)\\
&=D_{\bx}^{\bs{\alpha}}f(\z,\bnu),
\end{align*}
which completes the proof.
\end{proof}
\bibliographystyle{siam}

\begin{thebibliography}{99}
%\bibitem{Agmon} S. Agmon, A. Douglis, L. Nirenberg, \emph{Estimates Near the Boundary for Solutions of Elliptic Partial Differential Equations Satisfying General Boundary Conditions II}, Communications on Pure and Applied Mathematics, Vol. XVII, 1964, pp. 35-92.

%\bibitem{Ah} L.V. Ahlfors, \emph{M\"{o}bius transformations in $\mathbb{R}^n$ expressed through 2$\times$2 matrices of Clifford numbers}, Complex Variables, Vol. 5, 1986, pp. 215-224.

\bibitem{Axler} S. Axler, P. Bourdon, W. Ramey, \emph{Harmonic function theory}, second edition, Graduate Texts in Mathematics, Springer, New York, 2001.

\bibitem{Ban} R. Ba\~{n}uelos, C.N. Moore, \emph{Probabilistic Behavior of Harmonic Functions}, Birkh\"{a}user, 1991.

\bibitem{Bargmann} V. Bargmann, E.P. Wigner, \emph{Group theoretical discussion of relativistic wave equations}, Proceedings of the National Academy of Sciences of the United States of America. 34(5), 1948, pp. 211-223.

\bibitem{Bures} J. Bure\v{s}, F. Sommen, V. Sou\v{c}ek, P. Van Lancker, \emph{Rarita-Schwinger Type Operators in Clifford Analysis}, J. Funct. Anal. Vol. 185, No. 2, 2001, pp. 425-455.

\bibitem{CO} J.L. Clerc, B. \O rsted, \emph{Conformal covariance for the powers of the Dirac operator}, https://arxiv.org/abs/1409.4983v1.

\bibitem{DeBie} H. De Bie, D. Eelbode, M. Roels, \emph{The Higher Spin Laplace Operator}, Potential Analysis, Vol. 47, Issue 2, 2017, pp. 123-149.

%\bibitem{DeQian} R. Delanghe, T. Qian, \emph{Half Dirichlet Problems and Decompositions of Poisson Kernels}, Advances in Applied Clifford Algebras, Vol. 17, 2007, pp. 383-393.

\bibitem{Del}R. Delanghe, F. Sommen, V. Sou\v{c}ek, \emph{Clifford Algebra and Spinor-Valued Functions, A Function Theory for the Dirac Operator}, Springer Netherlands, 1992.

\bibitem{Ding} C. Ding, \emph{Integral Formulas for Higher Order Conformally Invariant Fermionic Operators}, Advances in Applied Clifford Algebras, 29, 37 (2019) doi:10.1007/s00006-019-0953-4.

\bibitem{DR} C. Ding, J. Ryan, \emph{Some Properties of the Higher Spin Laplace Operator}, Transactions of the American Mathematical Society, Vol. 371, Issue 5, 2019, pp. 3375-3395.

%\bibitem{DWR0} C. Ding, J. Ryan, \emph{On Some Conformally Invariant Operators in Euclidean Space}, Clifford Analysis and Related Topics, in Honor of Paul A.M. Dirac, CART 2014, Tallahassee, Florida, December 15-17, pp. 53-72.

\bibitem{DWR}C. Ding, R. Walter, J. Ryan, \emph{Construction of Arbitrary Order Conformally Invariant Operators in Higher Spin Spaces}, The Journal of Geometric Analysis, Vol. 27, Issue 3, 2017, pp. 2418-2452.

%\bibitem{Dinh}D.C. Dinh, \emph{Dirichlet boundary value problem for monogenic functions in Clifford analysis}, Complex Variables and Elliptic Equations, An International Journal, Vol. 59, Issue 9, 2014, pp. 1201-1213.

\bibitem{Dunkl} C.F. Dunkl, J. Li, J. Ryan, P. Van Lancker, \emph{Some Rarita-Schwinger type operators}, Computational Methods and Function Theory, Vol. 13, Issue 3, 2013, pp. 397-424.

\bibitem{Eelbode} D. Eelbode, M. Roels, \emph{Generalised Maxwell equations in higher dimensions}, Complex Analysis and Operator Theory, Vol. 10, Issue 2, 2016, pp. 267-293.

%\bibitem{Eelbode0} D. Eelbode, T. Raeymaekers, P. Van Lancker, \emph{On the fundamental solution for higher spin Dirac operators}, Journal of Mathematical Analysis and Applications, Vol. 405, Issue 2, 2013, 555-564.
%\bibitem{Eelbode1} D. Eelbode, T. Raeymaekers, \emph{Polynomial solutions for arbitrary higher spin Dirac operators}, Experimental Mathematics, Vol. 24, Issue 3, 2015, 339-354.

\bibitem{Gilbert} J. Gilbert, M. Murray, \emph{Clifford Algebras and Dirac Operators in Harmonic Analysis},  Cambridge University Press, Cambridge, 1991.

%\bibitem{Guer}K. G\"{u}rlebeck, K. Habetha, W. Spr\"{o}\ss ig, \emph{Holomorphic Functions in the Plane and n-dimensional Space}, Birkh\"{a}user Basel, 2008.

%\bibitem{Guer1}K. G\"{u}rlebeck, K. Habetha, W. Spr\"{o}\ss ig, \emph{Application of Holomorphic Functions in Two and Higher Dimensions}, Birkh\"{a}user Basel, 2016.

%\bibitem{Guer2} K. G\"{u}rlebeck, W. Spr\"{o}\ss ig, \emph{Quaternionic Analysis and Elliptic Boundary Value Problems}, Birkh\"{a}user Basel, 1989.

%\bibitem{Huang} S. Huang, Y. Qiao, G. Wen, \emph{Real and Complex Clifford Analysis}, Advances in Complex Analysis and Its Applications, Vol. 5, Springer US, 2006.

%\bibitem{Ku}, M. Ku, U. K\"ahler, P. Cerejeiras, \emph{Dirichlet-type problems for the iterated Dirac operator on the unit ball in Clifford analysis}, Advances in Hypercomplex Analysis. Springer INdAM Series, vol 1. Springer, Milano, 2013.

%\bibitem{Li} J. Li, J. Ryan, \emph{Some operators associated with Rarita-Schwinger type operators}, Complex Variables and Elliptic Equations, An International Journal. Vol. 57, Issue 7-8, 2012, pp. 885-902.

%\bibitem{Mc} A. McIntosh, \emph{Clifford algebras and the higher-dimensional Cauchy integral, Approximation Theory and Function Spaces}, Banach Center Publications, 22(1989), pp. 253-267.

\bibitem{Nar} R. Narasimhan, \emph{Analysis on Real and Complex Manifolds}, North-Holland Mathematical Library, Vol. 35, 1985.

\bibitem{Porteous} I. Porteous, \emph{Clifford algebra and the classical groups}, Cambridge University Press, Cambridge, 1995.

\bibitem{JR} J. Ryan, \emph{Dirac Operators, Conformal Transformations and Aspects of Classical Harmonic Analysis}, Journal of Lie Theory, Vol. 8, 1998, pp. 67-82.

\bibitem{Stein} E. Stein, G. Weiss, \emph{Generalization of the Cauchy-Riemann equations and representations of the rotation group}, Amer. J. Math. 90 (1968), pp. 163-196.

%\bibitem{SW}E.M. Stein,G. Weiss, \emph{Introduction to Fourier Analysis on Euclidean Spaces}, (PMS-32). Princeton, New Jersey: Princeton University Press, 1971.

%\bibitem{Vanlancker} P. Van Lancker, F. Sommen, D. Constales, \emph{Models for irreducible representations of $Spin(m)$}, Advances in Applied Clifford Algebras, 11(S1), 2001, 271-289.


\end{thebibliography}

\end{document}